\DeclareMathOperator{\Img}{Im}
\DeclareMathOperator{\Ker}{Ker}
\DeclareMathOperator{\pao}{\partial^0}
\DeclareMathOperator{\pai}{\partial^1}
\DeclareMathOperator{\res}{Res}
\DeclareMathOperator{\Span}{span}
\DeclareMathOperator{\witt}{Witt}
\begin{document}

\markboth{J. Bakeberg and P. Nag}
{Revisiting the comp. of cohom. classes using CFT}

%
%

\title{Revisiting the computation of cohomology classes of the Witt algebra  \\
using conformal field theory and aspects of conformal algebra }

\author{JACKSYN BAKEBERG}

\address{Department of Mathematics and Statistics, McGill University, \\
Montr\'eal, Qu\'ebec H3A 0B9, Canada \\ 
\email{jacksyn.bakeberg@mail.mcgill.ca } }

\author{PARTHASARATHI NAG}

\address{School of Mathematics and Social Sciences, Black Hills State University,\\
Spearfish, South Dakota 57799, United States\\
p.nag@bhsu.edu }

\maketitle


\begin{abstract}
In this article, we revisit some aspects of the computation of the cohomology class of $H^2 ( \witt, \mathbb{C})$ using some methods in two-dimensional conformal field theory and conformal algebra to obtain the one-dimensional central extension of the Witt algebra to the Virasoro algebra. Even though this is well-known in the context of standard mathematical physics literature, the operator product expansion of the energy-momentum tensor in two-dimensional conformal field theory is presented almost axiomatically. In this paper, we attempt to reformulate it with the help of a suitable modification of conformal algebra (as developed by V. Kac), and apply it to compute the representative element of the cohomology class which gives the desired central extension. This paper was written in the scope of an undergraduate's exploration of conformal field theory and his attempt to gain insight on the subject from a mathematical perspective.
\end{abstract}

\keywords{conformal field theory; conformal algebra; Witt algebra; central extension; Virasoro algebra.}

\ccode{Mathematics Subject Classification 2000: 	81-02, 81T40, 81R10, 17B56, 17B68, 17B69}

\section{Introduction}

The computation of the cohomology class $H^2(\witt, \mathbb{C})$ is well-known in the context of central extension of the Witt algebra and conformal field theory (CFT). However, we note that this computation in the opinion of these authors is unclear, especially in the mathematical physics literature dealing with CFT. In particular, in \cite{blum} we find that the form of the operator product expansion of the energy-momentum tensor is presented almost axiomatically as 
\begin{equation}
    T(z)T(w) \sim \frac{\frac{c}{2}}{(z-w)^4} + \frac{2 T(w)}{(z-w)^2} + \frac{\partial_w T (w)}{z-w}. 
\end{equation}
In this article we compute $H^2(\witt, \mathbb{C})$ analytically using ideas from CFT and some tools from Kac's conformal algebra \cite{kac}. In section 2 we present the necessary background material on Lie algebras, their cohomology (in the finite-dimensional case), and central extensions of a Lie algebra by a one-dimensional complex vector space $\mathbb{C}c$. Section 3 briefly introduces two-dimensional CFT. One aspect discussed in detail is the so-called energy-momentum tensor, which characterizes the two-dimensional CFT. Then in section 4, we define the Witt algebra, which is an example of an infinite-dimensional Lie algebra, and discuss its central extension by $\mathbb{C}c$ to the Virasoro algebra. In section 5, we compute the cohomology class $H^2(\witt, \mathbb{C})$ analytically using CFT. In section 5.1 we adapt results related to conformal algebra from Kac \cite{kac} (specifically sections 2.1-2.6 in \cite{kac}) to obtain the operator product expansion of two local eigenfields of conformal weight $\Delta$ and $\Delta'$. In section 5.2 we apply the results of section 5.1 to the energy-momentum tensor and use this to compute the cohomology class. Finally, in the Conclusion and Future work we summarize the key results which lead to the construction of the Virasoro algebra, and we propose to investigate the algebra that may arise in the case $c^{N-1}(w)$ is a monomial of non-zero degree.  

\section{Review of key ideas on Lie algebras and their cohomology}

\subsection{Lie algebras}

    \begin{definition}
    (Lie algebra) A Lie algebra $\mathfrak{g}$ is a vector space over a field $\mathbb{F}$ along with a bilinear map $[\ ,\ ]_{\mathfrak{g}}:\mathfrak{g}\times \mathfrak{g}\rightarrow \mathfrak{g}$ such that for all $X,Y,Z \in \mathfrak{g}$: 
    \begin{enumerate}
    \item $[X,X]_\mathfrak{g} = 0$,
    \item
    $ [X,[Y,Z]_{\mathfrak{g}}]_{\mathfrak{g}}+[Y,[Z,X]_{\mathfrak{g}}]_{\mathfrak{g}}+[Z,[X,Y]_{\mathfrak{g}}]_{\mathfrak{g}}=0 $.
    \end{enumerate}
    This bilinear map is called a Lie bracket.
    \end{definition}
    
    \begin{remark}
    The subscript is added to the bracket (i.e. $[\ ,\ ]_{\mathfrak{g}}$) to distinguish it from other bracket operations. If there is no potential confusion, the subscript is often omitted.
    \end{remark}
    
    Property (2) is called the \textit{Jacobi identity}.
    Applying bilinearity and property (1) to $[X+Y, X+Y]_\mathfrak{g}$ we obtain another property:
    \begin{equation}
        [X,Y] = - [Y,X].
    \end{equation}
    This is called \textit{skew-symmetry}. If the characteristic of the field $\mathbb{F}$ is not 2, then skew-symmetry implies property (2) as well. We define the dimension of a Lie algebra to be its dimension as a vector space.

    A first example of a Lie algebra is the space of linear transformations on a finite-dimensional vector space $V$ along with the Lie bracket operation defined as $[X,Y]=X \circ Y - Y \circ X$ where $\circ$ is a composition, denoted $\mathfrak{gl}(V)$. For this reason, the bracket operation is often called the \textit{commutator}, and if $[X,Y]=0$ then we say $X$ and $Y$ commute. Any vector space $V$ can be considered a Lie algebra with the bracket operation $[X,Y]=0$ for all $X,Y \in V$. Such a Lie algebra is called \textit{abelian}. More background on Lie algebras can be found in \cite{humphreys, bourbaki}
    
\subsection{Lie Algebra Cohomology}
    
    (Co)homology first arises in algebraic topology, where it involves associating a sequence of groups to a topological space in order to study various properties of the topological space. It also can be generalized to study other objects, such as Lie algebras. In this section we present the basic definitions and discuss the properties in the cohomology theory of finite-dimensional Lie algebras. However, in section \ref{witt} we discuss the infinite-dimensional Lie algebra of vector fields on $\mathbb{C} \setminus \{0\}$ or its restriction on $S^{1}$ known as the Witt algebra, whose cohomology can be handled similarly with appropriate modifications. 
    
\subsubsection{Lie algebra cohomology with complex coefficients}

    Let $\mathfrak{g}$ be a finite-dimensional complex Lie algebra and let $\omega: \mathfrak{g} \times ... \times \mathfrak{g} = \mathfrak{g}^k \to \mathbb{C}$ be a $k$-linear form.
    Such a $k$-linear form is called \textit{alternating} if the following is true:
    \begin{equation}
    \omega(X_1, ..., X_i, ... , X_j, ... , X_k) = -\omega(X_1, ..., X_j, ... , X_i, ... , X_k)
    \end{equation}
    where $X_1, ... , X_k \in \mathfrak{g}$. The set of all alternating $k$-linear forms is denoted by $C^k (\mathfrak{g}, \mathbb{C})$ and is called the $k$\textit{-th cochain}. Note that $C^0(\mathfrak{g},\mathbb{C}) := \mathbb{C}$. 
    
    We recall that given $\eta \in C^p (\mathfrak{g}, \mathbb{C})$, $\theta \in C^q (\mathfrak{g}, \mathbb{C})$, and $\omega \in C^r (\mathfrak{g}, \mathbb{C})$,  we can define a product $\wedge$ with the following properties:
    \begin{itemize}
        \item $\eta \wedge \theta \in C^{p+q}(\mathfrak{g}, \mathbb{C})$,
        
        \item $\eta \wedge (\theta + \omega) = \eta \wedge \theta + \eta \wedge \omega$,
        
        \item $(\eta \wedge \theta) \wedge \omega = \eta \wedge (\theta \wedge \omega)$,
        
        \item $\eta \wedge \theta = (-1)^{pq} \theta \wedge \eta$
    \end{itemize}
    We call this the \textit{wedge product} or \textit{exterior product}. This gives $C^*(\mathfrak{g},\mathbb{C}) := \bigoplus_{k=0}^\infty C^k (\mathfrak{g},\mathbb{C})$ the structure of a ring.
    
    Given $ \omega \in C^k (\mathfrak{g}, \mathbb{C})$, we define the \textit{coboundary operator} $\partial_k : C^k (\mathfrak{g}, \mathbb{C}) \rightarrow C^{k+1} (\mathfrak{g}, \mathbb{C})$ for all $k \geq 1$ as follows:
    \begin{align}
      \partial_k (\omega)(X_1, \cdots & , X_{k+1} )  \\
      & = \sum_{1 \le i < j \le k+1} (-1)^{i + j}\omega([X_i,X_j], X_1, \cdots , \hat{X}_{i}, \cdots , \hat{X}_{j}, \cdots, X_{k+1}) \nonumber
    \end{align}
    where $X_1, \ldots , X_{k+1} \in \mathfrak{g}$ and $\hat{X}_n$ signifies that the element has been removed. If $k=0$ we define $\partial_0 \omega = 0$.
    We can use the coboundary operator to construct a long sequence, known as the \textit{Chevalley-Eilenberg Complex} denoted by $\mathcal{C}$:
    \begin{multline}
        \mathcal{C} : \{0\} \rightarrow C^0 (\mathfrak{g}, \mathbb{C}) \xrightarrow{\partial_0} C^1 (\mathfrak{g}, \mathbb{C}) \xrightarrow{\partial_1} C^2 (\mathfrak{g}, \mathbb{C}) \rightarrow \cdots \\
        \cdots \rightarrow C^{k} (\mathfrak{g}, \mathbb{C}) \xrightarrow{\partial_{k}} C^{k+1} (\mathfrak{g}, \mathbb{C}) \rightarrow \cdots
    \end{multline}
    
    \begin{remark}
    For simplicity we write $\partial_k = \partial$ if there is no chance of confusion.
    \end{remark}
    
    \begin{proposition} \label{cohomlem1}
    For $\eta \in C^p (\mathfrak{g}, \mathbb{C})$, $\omega \in C^q (\mathfrak{g}, \mathbb{C})$, 
    \begin{equation} \label{lem1eq}
        \partial (\eta \wedge \omega) = \partial(\eta) \wedge \omega + (-1)^p \eta \wedge \partial(\omega)
    \end{equation}
    \end{proposition}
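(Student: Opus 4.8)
The plan is to prove the stronger structural statement that $\partial$ is a super-derivation of degree $+1$ on the graded ring $C^{*}(\mathfrak{g},\mathbb{C})$, reducing it to the cases $p=0$ and $p=1$. The point is that $C^{*}(\mathfrak{g},\mathbb{C})$, being the algebra of alternating multilinear forms on the finite-dimensional space $\mathfrak{g}$, is the exterior algebra $\Lambda^{\bullet}\mathfrak{g}^{*}$ and is therefore generated as a ring by $C^{0}(\mathfrak{g},\mathbb{C})=\mathbb{C}$ and $C^{1}(\mathfrak{g},\mathbb{C})=\mathfrak{g}^{*}$. Let $G$ be the set of homogeneous $\eta$ for which \eqref{lem1eq} holds for \emph{every} $\omega$. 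First I would verify that $G$ is closed under $\wedge$ and that $G\cap C^{p}(\mathfrak{g},\mathbb{C})$ is a subspace for each $p$: for the former, given $\eta_{1},\eta_{2}\in G$, write $\partial(\eta_{1}\wedge\eta_{2}\wedge\omega)$ using associativity of $\wedge$, apply the defining relation of $G$ to $\eta_{1}$ (with second argument $\eta_{2}\wedge\omega$) and then to $\eta_{2}$ (with second argument $\omega$), and reassemble; the exponents add, $(-1)^{p_{1}}(-1)^{p_{2}}=(-1)^{p_{1}+p_{2}}$, giving \eqref{lem1eq} for $\eta_{1}\wedge\eta_{2}$, and the subspace property follows from $\mathbb{C}$-linearity of $\partial$ and distributivity of $\wedge$. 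Hence it suffices to show $C^{0}(\mathfrak{g},\mathbb{C})\subseteq G$ and $C^{1}(\mathfrak{g},\mathbb{C})\subseteq G$.

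The degree-$0$ case is immediate: if $\eta\in C^{0}(\mathfrak{g},\mathbb{C})=\mathbb{C}$ then $\partial\eta=\partial_{0}\eta=0$ by definition, $\eta\wedge\omega$ is the scalar multiple $\eta\,\omega$, and $\partial$ is $\mathbb{C}$-linear, so $\partial(\eta\wedge\omega)=\eta\,\partial\omega=(-1)^{0}\,\eta\wedge\partial\omega$, which is \eqref{lem1eq}. Everything therefore rests on the degree-$1$ case: $\eta=\alpha\in C^{1}(\mathfrak{g},\mathbb{C})$ and $\omega\in C^{q}(\mathfrak{g},\mathbb{C})$ arbitrary, where \eqref{lem1eq} becomes $\partial(\alpha\wedge\omega)=\partial\alpha\wedge\omega-\alpha\wedge\partial\omega$. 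For this I would use the explicit antisymmetrization formula for $\wedge$ in its degree-$1$ instance,
\[
(\alpha\wedge\theta)(X_{1},\dots,X_{m+1})=\sum_{k=1}^{m+1}(-1)^{k-1}\,\alpha(X_{k})\,\theta(X_{1},\dots,\hat{X}_{k},\dots,X_{m+1}),\qquad \theta\in C^{m}(\mathfrak{g},\mathbb{C}),
\]
together with the $k=1$ instance of the coboundary formula, $\partial\alpha(X,Y)=-\alpha([X,Y])$.

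Evaluating $\partial(\alpha\wedge\omega)$ on $(X_{1},\dots,X_{q+2})$ via the definition of $\partial$ and then expanding each $(\alpha\wedge\omega)([X_{i},X_{j}],\dots)$ by the displayed identity splits every summand into two types: (a) a term $\alpha([X_{i},X_{j}])\,\omega(\text{the remaining }X\text{'s, in order})$, and (b) terms $\alpha(X_{m})\,\omega([X_{i},X_{j}],\dots)$ with $m\notin\{i,j\}$. Regrouping the type-(a) terms and comparing the permutation signs with the $(2,q)$-shuffle expansion of $\partial\alpha\wedge\omega$ (using $\partial\alpha(X_{i},X_{j})=-\alpha([X_{i},X_{j}])$) shows that the type-(a) part equals $\partial\alpha\wedge\omega$; regrouping the type-(b) terms and matching them against the expansion of $\alpha\wedge\partial\omega$ (obtained by substituting the coboundary formula into the displayed identity) shows the type-(b) part equals $-\alpha\wedge\partial\omega$. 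Adding the two gives the claim. The one genuinely computational step — and the main obstacle — is this sign-and-index bookkeeping: one must carefully track the parity acquired when the bracketed pair $[X_{i},X_{j}]$ is moved past the entries on which $\alpha$ is evaluated, and repeatedly invoke the skew-symmetry of $\omega$ and of the bracket. One could instead bypass the reduction to generators and run the same shuffle computation directly for general $p$ and $q$, but the route above keeps the combinatorics to a minimum.
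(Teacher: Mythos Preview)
Your proposal is correct and follows essentially the same strategy as the paper: both reduce the Leibniz rule to the degree-$0$ and degree-$1$ generators of $\Lambda^{\bullet}\mathfrak{g}^{*}$ and then propagate via the wedge product---the paper phrases this as induction on $p$ using decomposables $\eta'=\theta\wedge\eta$ with $\theta\in C^{1}$, while you phrase it as closure of the set $G$ under $\wedge$. Your treatment is in fact more complete, since you explicitly carry out the $p=1$ verification via the shuffle/sign computation, whereas the paper's inductive step invokes ``the case $p=1$'' without ever proving it.
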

    
    \begin{proof}
    We prove the claim by induction on $p$. For the case $p=0$, choose $\eta \in C^0 (\mathfrak{g}, \mathbb{C})$ and $\omega \in C^q(\mathfrak{g}, \mathbb{C})$, then since $\eta$ is a scalar $\partial(\eta \wedge \omega) = \eta \partial(\omega) = \eta \wedge \partial(\omega)$. Let us assume that the statement is true for $\eta \in C^{p-1} (\mathfrak{g}, \mathbb{C})$, then choose $\theta \in C^1(\mathfrak{g}, \mathbb{C})$ and let $\eta' = \theta \wedge \eta \in C^p(\mathfrak{g}, \mathbb{C})$. Since $\partial(\eta' \wedge \omega) = \partial(\theta \wedge \eta \wedge \omega) = \partial (\theta) \wedge (\eta \wedge \omega) - \theta \wedge \partial(\eta \wedge \omega)$ (applying the case $p=1$), then $\partial (\theta) \wedge (\eta \wedge \omega) - \theta \wedge \partial(\eta \wedge \omega)=\partial (\theta) \wedge (\eta \wedge \omega) - \theta \wedge \partial(\eta)\wedge \omega - (-1)^{p-1}\theta \wedge \eta \wedge \partial(\omega).  $ Combining, the first two terms of the previous expression we have $\partial(\eta' \wedge \omega) = \partial(\theta \wedge \eta) + (-1)^{p}\theta \wedge \eta \wedge \partial (\omega),$ hence $\partial(\eta' \wedge \omega) = \partial(\eta') \wedge \omega + (-1)^{p} \eta' \wedge \partial(\omega).$ Therefore, equation (\ref{lem1eq}) is true if $\eta' = \theta \wedge \eta$. The claim follows by linearity for any $\eta' \in C^p(\mathfrak{g}, \mathbb{C})$
    \end{proof}
    
    \begin{proposition}
    For all $k \in \mathbb{N}$, $\partial_{k+1} \circ \partial_k = 0$. 
    \end{proposition}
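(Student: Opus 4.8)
The plan is to avoid the unwieldy double sum that a direct expansion of $\partial_{k+1}(\partial_k\omega)$ produces, and instead reduce the identity to a single low-degree check by exploiting Proposition~\ref{cohomlem1}. The first ingredient is the observation that, for finite-dimensional $\mathfrak{g}$, $C^*(\mathfrak{g},\mathbb{C})=\bigoplus_k C^k(\mathfrak{g},\mathbb{C})$ is the exterior algebra on $C^1(\mathfrak{g},\mathbb{C})=\mathfrak{g}^*$, and hence is generated as a ring by $C^0(\mathfrak{g},\mathbb{C})=\mathbb{C}$ together with $C^1(\mathfrak{g},\mathbb{C})$. (In the infinite-dimensional situation of Section~\ref{witt} one replaces $\mathfrak{g}^*$ by the appropriate restricted dual, with the same conclusion.)

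Next I would show that $D:=\partial\circ\partial$ itself obeys a Leibniz-type rule. Applying the identity of Proposition~\ref{cohomlem1} twice to $\eta\in C^p(\mathfrak{g},\mathbb{C})$ and $\omega\in C^q(\mathfrak{g},\mathbb{C})$ gives $\partial^2(\eta\wedge\omega)=\partial^2(\eta)\wedge\omega+\big[(-1)^{p+1}+(-1)^{p}\big]\,\partial(\eta)\wedge\partial(\omega)+\eta\wedge\partial^2(\omega)$, and the bracketed coefficient vanishes, so $D(\eta\wedge\omega)=D(\eta)\wedge\omega+\eta\wedge D(\omega)$. Thus $D$ is an (even) derivation of $C^*(\mathfrak{g},\mathbb{C})$, and an easy induction on wedge-length shows that a derivation vanishing on a set of ring generators vanishes identically.

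It then remains to verify $D=0$ on the generators. On $C^0(\mathfrak{g},\mathbb{C})$ this is immediate since $\partial_0=0$. On $C^1(\mathfrak{g},\mathbb{C})$: for $\theta\in C^1(\mathfrak{g},\mathbb{C})$ the coboundary formula gives $\partial_1\theta(X,Y)=-\theta([X,Y])$, and expanding $\partial_2(\partial_1\theta)(X_1,X_2,X_3)$ from the definition, collecting the three summands and using skew-symmetry of the bracket, one obtains (up to an overall sign) $\theta\big([[X_1,X_2],X_3]+[[X_2,X_3],X_1]+[[X_3,X_1],X_2]\big)$, which equals $\theta(0)=0$ by the Jacobi identity. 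Combining this with the derivation property of $D$ from the previous paragraph yields $\partial_{k+1}\circ\partial_k=0$ for all $k$.

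The only genuine computation is the degree-one check in the last step, where one must track the signs $(-1)^{i+j}$ in the coboundary formula and rearrange the resulting terms into the cyclic form to which the Jacobi identity applies; everything else is formal bookkeeping with the wedge product. I would expect this sign-rearrangement to be the main (and essentially only) obstacle. As an alternative one could expand $\partial_{k+1}(\partial_k\omega)$ directly, separating the terms with two nested brackets from those with two disjoint brackets, and show the former cancel in cyclic triples via Jacobi while the latter cancel by antisymmetry of the summation indices and the alternating property of $\omega$; this works but requires considerably more index manipulation, so the reduction above is preferable.
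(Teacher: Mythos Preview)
Your proposal is correct and follows essentially the same route as the paper: both establish the $k=1$ case by a direct computation invoking the Jacobi identity, and both use Proposition~\ref{cohomlem1} twice to see that the cross terms $\partial(\eta)\wedge\partial(\omega)$ cancel, reducing the general case to wedge products of degree-one cochains. The only difference is packaging---you phrase the reduction as ``$D=\partial^2$ is a derivation vanishing on the ring generators $C^0\oplus C^1$'', whereas the paper writes out the same computation as an explicit induction on $k$ with $\eta'=\theta\wedge\eta$, $\theta\in C^1$---but the underlying argument is identical.
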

    
    \begin{proof}
    We prove the claim by induction on $k$. If $k=1$, then for any $\omega \in C^1(\mathfrak{g}, \mathbb{C})$ 
    \begin{align*}
        \omega' := \partial_1(\omega)(X_1, X_2) & = -\omega ([X_1,X_2]) \\
        \implies \partial_2 \circ \partial_1 (\omega) (X_1, X_2, X_3) & = \partial_2 (\omega') (X_1, X_2, X_3) \\
         = - \omega' ([X_1, & X_2],X_3) + \omega'([X_1,X_3],X_2) - \omega'([X_2,X_3],X_1) \\
         = \ \ \omega(\big[ [X_1, & X_2 ] , X_3 \big]) - \omega(\big[ [X_1,X_3], X_2 \big]) + \omega(\big[ [X_2,X_3], X_1 \big]) \\
         = \ \ \omega (\big[ [X_1, & X_2 ] , X_3 \big] - \big[ [X_1,X_3], X_2 \big] + \big[ [X_2,X_3], X_1 \big] ) \\
         = \ \ \omega (\big[ [X_1, & X_2 ] , X_3 \big] + \big[ [X_3,X_1], X_2 \big] + \big[ [X_2,X_3], X_1 \big] ) 
    \end{align*}
    By the Jacobi identity on $\mathfrak{g}$, we get $\partial_2 \circ \partial_1 = 0$. Let the induction hypothesis be true for $k=q-1.$ Consider $\eta' = \theta \wedge \eta$ where $\theta \in C^1(\mathfrak{g}, \mathbb{C})$ and $\eta \in C^{q-1}(\mathfrak{g}, \mathbb{C}).$ Then by proposition \ref{cohomlem1}, $\partial(\eta') = \partial(\theta)\wedge \eta - \theta \wedge \partial(\eta)$ and $\partial^{2}(\eta') = \partial^{2}(\theta) \wedge \eta + \partial(\theta) \wedge \partial(\eta) - \partial(\theta) \wedge \partial (\eta) + \theta \wedge \partial^{2}(\eta) = 0.$ Once again, it follows by linearity that $\partial^{2}(\eta') =0$ for all $\eta' \in C^q(\mathfrak{g}, \mathbb{C}).$
    \cite{chevalley}
    \end{proof}
    
    If $\omega \in \Img \partial_{k-1}$, then $\omega \in C^k (\mathfrak{g}, \mathbb{C})$ is called a $k$-coboundary. The set of all $k$-coboundaries is denoted by $B^k(\mathfrak{g}, \mathbb{C})$.
    
    If $\omega \in \Ker \partial_k$, then $\omega \in C^k (\mathfrak{g}, \mathbb{C})$ is called a $k$-cocycle. The set of all $k$-cocycles is denoted by $Z^k (\mathfrak{g}, \mathbb{C})$. 
    
    Given a $k$-coboundary $\omega$, we know that $\omega = \partial \omega'$ for some $\omega' \in C^{k -1}(\mathfrak{g}, \mathbb{C})$. Applying the coboundary operator yields $\partial \omega = \partial ^2 \omega'$. It follows that $\partial \omega = 0$, which implies that
    $$B^k(\mathfrak{g}, \mathbb{C}) \subset Z^k (\mathfrak{g}, \mathbb{C}).$$
    \begin{definition} (Singular cohomology)
    The $k^{th}$ singular cohomology with values in $\mathbb{C}$, $H^k (\mathfrak{g}, \mathbb{C})$, is defined by
    \begin{equation}
    H^k (\mathfrak{g}, \mathbb{C}) = Z^k (\mathfrak{g}, \mathbb{C}) / B^k(\mathfrak{g}, \mathbb{C})
    \end{equation}.
    \end{definition}
    
    
    \begin{remark}
        If $\mathfrak{g}$ is an infinite-dimensional Lie algebra, we must consider continuous $k$-linear forms, obtained by topologizing the Lie algebra $\mathfrak{g}$ and $\mathbb{C}$. For example, let $M$ be a smooth compact manifold and let $\mathfrak{g}$ be the Lie algebra of all smooth vector fields on $M$ with the $C^{\infty}$ topology, then the corresponding cohomology is called the Gelfand-Fuchs cohomology. Details can be found in \cite{gelfand, bott,khesin}.
    \end{remark}

\subsubsection{Central extensions and $H^2 (\mathfrak{g}, \mathbb{C})$}

    Consider two complex Lie algebras $\mathfrak{g}$ and $\Hat{\mathfrak{g}}$, and let $\mathbb{C}c := \Span \{ c\}$ where $c$ is contained in the center of $\Hat{\mathfrak{g}}$, i.e. $[X,c] = 0$ for all $X \in \Hat{\mathfrak{g}}$. Consider the following short sequence
    \begin{equation*}
        \{0\} \rightarrow \mathbb{C} c \xrightarrow{\eta} \mathfrak{\hat{g}} \xrightarrow{\pi} \mathfrak{g} \rightarrow \{0\}.
    \end{equation*} 
    This sequence is called \textit{exact} if $\Img \eta = \Ker \pi$. The \textit{splitting lemma} states that if there exists a map $\sigma: \mathfrak{g} \to \hat{\mathfrak{g}}$ such that $\sigma \circ \pi = id_\mathfrak{g}$, then 
    \begin{equation} \label{centext}
        \hat{\mathfrak{g}} \simeq \mathfrak{g} \oplus \mathbb{C}c.
    \end{equation}
    or equivalently
    \begin{equation} \label{centext1}
        \mathfrak{g} \simeq \Hat{\mathfrak{g}} / \mathbb{C}c
    \end{equation}
    Moreover 
    \begin{equation} \label{centext2}
        \mathbb{C}c \simeq I
    \end{equation}
    where $I$ is some ideal contained in the center of $\Hat{\mathfrak{g}}$.
    
    The map $\sigma$ is called a \textit{section} of $\mathfrak{g}$. Note that this result is a generalization of the rank-nullity theorem from linear algebra. 
    If (\ref{centext1}) and (\ref{centext2}) hold for $\Hat{\mathfrak{g}}$, then $\hat{\mathfrak{g}}$ is called a \textit{central extension} of $\mathfrak{g}$ by $\mathbb{C}c$.

\begin{theorem}
The inequivalent central extensions of a Lie algebra $\mathfrak{g}$ by $\mathbb{C} c$ are classified by $H^{2}(\mathfrak{g}, \mathbb{C}).$
\end{theorem}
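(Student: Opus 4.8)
The plan is to construct an explicit bijection between the set of equivalence classes of central extensions of $\mathfrak{g}$ by $\mathbb{C}c$ and the group $H^2(\mathfrak{g},\mathbb{C})$. Here I call two central extensions of $\mathfrak{g}$ by $\mathbb{C}c$ \emph{equivalent} if there is a Lie algebra isomorphism between the middle terms that restricts to the identity on $\mathbb{C}c$ and induces the identity on $\mathfrak{g}$; the trivial extension $\mathfrak{g}\oplus\mathbb{C}c$ will correspond to the zero class, so the statement is really one of pointed sets.

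\emph{From an extension to a cocycle.} Given a central extension with projection $\pi\colon\hat{\mathfrak{g}}\to\mathfrak{g}$, the splitting lemma provides a linear section $\sigma\colon\mathfrak{g}\to\hat{\mathfrak{g}}$ with $\pi\circ\sigma=\mathrm{id}_{\mathfrak{g}}$; in general $\sigma$ fails to be a Lie algebra homomorphism, and I would measure this failure by
\[
\omega_\sigma(X,Y):=[\sigma(X),\sigma(Y)]_{\hat{\mathfrak{g}}}-\sigma([X,Y]_{\mathfrak{g}}).
\]
Applying $\pi$ shows $\omega_\sigma(X,Y)\in\Ker\pi\simeq\mathbb{C}c$, so $\omega_\sigma$ may be viewed as an element of $C^2(\mathfrak{g},\mathbb{C})$ once we fix the identification $\mathbb{C}c\simeq\mathbb{C}$ (bilinearity and skew-symmetry are immediate from those of the bracket). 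The crucial step is to write out the Jacobi identity for $\sigma(X),\sigma(Y),\sigma(Z)$ in $\hat{\mathfrak{g}}$: because $\mathbb{C}c$ is central, every bracket of the form $[\omega_\sigma(\cdot,\cdot)c,\sigma(\cdot)]$ drops out, and the surviving terms reassemble precisely into $\partial_2\omega_\sigma(X,Y,Z)$. Hence the Jacobi identity in $\hat{\mathfrak{g}}$ is equivalent to $\omega_\sigma\in Z^2(\mathfrak{g},\mathbb{C})$.

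\emph{Independence of choices.} Two linear sections differ by a linear map $\mathfrak{g}\to\Ker\pi\simeq\mathbb{C}$, i.e.\ by some $\mu\in C^1(\mathfrak{g},\mathbb{C})$; a one-line computation (using centrality again) gives $\omega_{\sigma'}-\omega_\sigma=\partial_1\mu$, so the class $[\omega_\sigma]\in H^2(\mathfrak{g},\mathbb{C})$ does not depend on $\sigma$. Moreover, if $\phi$ is an equivalence of extensions then $\phi\circ\sigma$ is a section of the second extension and $\omega_{\phi\circ\sigma}=\omega_\sigma$ (since $\phi$ is a homomorphism fixing $\mathbb{C}c$), so equivalent extensions have the same class. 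This yields a well-defined map from equivalence classes of central extensions to $H^2(\mathfrak{g},\mathbb{C})$.

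\emph{The inverse.} For $\omega\in Z^2(\mathfrak{g},\mathbb{C})$ I would set $\hat{\mathfrak{g}}_\omega:=\mathfrak{g}\oplus\mathbb{C}c$ as a vector space with bracket $[(X,ac),(Y,bc)]:=([X,Y]_{\mathfrak{g}},\,\omega(X,Y)c)$. Skew-symmetry follows because $\omega$ is alternating, and the Jacobi identity for $\hat{\mathfrak{g}}_\omega$ is exactly the condition $\partial_2\omega=0$ (the previous computation read backwards); since $(0,c)$ is central, $\hat{\mathfrak{g}}_\omega$ is a central extension, and the section $X\mapsto(X,0)$ shows its associated class is $[\omega]$. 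Conversely, any central extension $\hat{\mathfrak{g}}$ with section $\sigma$ is equivalent to $\hat{\mathfrak{g}}_{\omega_\sigma}$ via $(X,ac)\mapsto\sigma(X)+ac$, and replacing a cocycle $\omega$ by $\omega+\partial_1\mu$ produces an extension equivalent to $\hat{\mathfrak{g}}_\omega$ via $(X,ac)\mapsto(X,(a-\mu(X))c)$. Thus the two constructions are mutually inverse, and the correspondence is a bijection, which is what we wanted.

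I expect the main obstacle to be entirely computational bookkeeping: expanding the Jacobi identity in $\hat{\mathfrak{g}}$ and matching it, term by term and sign by sign, with the cocycle condition $\partial_2\omega_\sigma=0$ under the paper's sign conventions for $\partial_1$ and $\partial_2$, while keeping the identification $\mathbb{C}c\simeq\mathbb{C}$ and all the resulting signs consistent across the several small verifications above.
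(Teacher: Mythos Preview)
Your proposal is correct and follows essentially the same route as the paper: define the cocycle from a linear section, verify the cocycle condition via the Jacobi identity in $\hat{\mathfrak{g}}$, show that different sections change $\omega$ by a coboundary, and construct the inverse by equipping $\mathfrak{g}\oplus\mathbb{C}c$ with the $\omega$-twisted bracket. If anything, your outline is slightly more complete than the paper's version, since you spell out the explicit equivalences $(X,ac)\mapsto\sigma(X)+ac$ and $(X,ac)\mapsto(X,(a-\mu(X))c)$ that make the two constructions mutually inverse.
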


\begin{proof} 
Let $\mathfrak{\hat{g}}$ be a central extension of $\mathfrak{g}$ arising from the following short exact sequence:
\begin{equation*}
    \{0\} \rightarrow \mathbb{C}c \rightarrow \mathfrak{\hat{g}} \mathrel{\mathop{\rightleftarrows}^{\pi}_{\sigma}} \mathfrak{g} \rightarrow \{0\}
\end{equation*} 
where $\pi:\mathfrak{\hat{g}} \rightarrow \mathfrak{g}$ is the canonical projection and $\sigma:\mathfrak{g} \rightarrow \hat{\mathfrak{g}}$ is a section of $\hat{\mathfrak{g}}$. For $X,Y \in \mathfrak{g}$, let $\omega(X,Y) = [\sigma (X), \sigma(Y)]_{\mathfrak{\hat{g}}} - \sigma([X,Y]_{\mathfrak{g}})$. Thus, $\omega([X,Y]_{\mathfrak{g}},Z) + \omega([Y,Z]_{\mathfrak{g}},X] + \omega([Z,X]_{\mathfrak{g}},Y) = 0$ using the Jacobi identity in $\mathfrak{\hat{g}}.$ Hence, $\omega$ satisfies the $2-$cocycle property. 
Suppose $\sigma'$ is another section, and note that for all $X \in \mathfrak{g}$, $\pi \circ (\sigma - \sigma')(X) = 0$, thus $(\sigma-\sigma')(X) \in \mathbb{C}c $ or $\sigma(X) = \sigma'(X) + k c$ where $k\in \mathbb{C}$. Given another bilinear form $\omega'$ arising similarly from $\sigma'$, we would like to show that $\omega - \omega'$ belongs to the coboundary, i.e. $\omega - \omega' = \partial_{1} (\sigma - \sigma')$: 
\begin{align*}
    \partial_1( \sigma - \sigma')(X,Y) & = [( \sigma - \sigma')X, ( \sigma - \sigma')Y] - ( \sigma - \sigma')([X,Y]) \\
    & = [\sigma(X), \sigma(Y)] - [\sigma'(X), \sigma(Y)] - [\sigma(X), \sigma'(Y)] \\
     & \qquad  \qquad + [\sigma'(X), \sigma'(Y)] - \sigma ([X,Y]) + \sigma' ([X,Y]) \\
    & = [\sigma(X), \sigma(Y)] - [\sigma'(X), \sigma'(Y) + kc] - [\sigma'(X) + k'c, \sigma'(Y)] \\
    & \qquad \qquad + [\sigma'(X), \sigma'(Y)] \\ 
    & = \omega(X,Y) - \omega'(X,Y) 
\end{align*}
where $k,k' \in \mathbb{C}$. Hence $\omega$ is a 2-cocycle.

Conversely, take a 2-cocycle $\omega$ which is a representative element of a cohomology class in $H^{2}(\mathfrak{g}, \mathbb{C})$, i.e. for all $X,Y,Z \in \mathfrak{g}$:
\begin{align}
    \omega(X,Y) & = - \omega(Y,X) \label{alt} \\ 
    \omega([X,Y]_{\mathfrak{g}},Z) + \omega([Y,&Z]_{\mathfrak{g}},X] + \omega([Z,X]_{\mathfrak{g}},Y) = 0 \label{cocycle}
\end{align}
We can define a bracket on the vector space $\hat{\mathfrak{g}} = \mathfrak{g} \oplus \mathbb{C}c$ as follows
    \begin{equation}
        [X + \alpha c,Y +\beta c]_{\hat{\mathfrak{g}}} = [X,Y]_\mathfrak{g} + \omega(X,Y) c
    \end{equation}
where $\alpha, \beta \in \mathbb{C}$. If $\omega'$ is another bilinear form satisfying (\ref{alt}) and (\ref{cocycle}), then $\omega$ and $\omega'$ define isomorphic Lie algebra structures on $\mathfrak{g} \oplus \mathbb{C}c$ if and only if there exists a map $\mu: \mathfrak{g} \to \mathbb{C}$ such that 

\begin{equation}
    \omega(X,Y)  = \omega'(X,Y) + \mu ([X,Y]_\mathfrak{g})
\end{equation}
In the above construction, the Lie algebra $\hat{\mathfrak{g}}$ is a central extension of $\mathfrak{g}$ by $\mathbb{C}c$ obtained by associating the bilinear form $\omega$. This shows that corresponding to any element of  $H^{2}(\mathfrak{g}, \mathbb{C})$ we can associate an isomorphism class of a central extension of $\mathfrak{g}.$ Hence, we have shown that there is a one-to-one correspondence between the inequivalent central extensions of a Lie algebra $\mathfrak{g}$ by $\mathbb{C}c$ and $H^{2}(\mathfrak{g}, \mathbb{C}).$ \cite{cohomUBC} 
\end{proof}

\section{A brief introduction to conformal field theory}

    A conformal field theory is a quantum field theory that is invariant under \textit{conformal transformations}, which are transformations that preserve the angle between two lines. In a flat space-time with dimension $D \geq 3$, the conformal algebra is the Lie algebra corresponding to the conformal group generated by globally-defined invertible finite transformations, which are translations, rotations, dilations, and special conformal transformations (for more details see \cite{blum} ). In this paper we are interested in dimension $D=2$ since the Lie algebra of infinitesimal conformal transformations is infinite dimensional and has been investigated in complete detail by Belavin et. al. in \cite{bel}. 
    Conformal field theory can be used to understand certain natural phenomena, and arises in string theory as well. It has long served as a meeting point between physics and mathematics, spurring progress in both fields.

    Consider the complexification of coordinates in $\mathbb{R}^2$, $(x^0, x^1) \mapsto z = x^0 +i x^1$. Let $\Bar{z} := x^0 - i x^1$. In conformal field theory, $z$ and $\Bar{z}$ are considered independent complex variables. Thus the field $\phi(x^0, x^1)$ on $\mathbb{R}^2$  becomes $\phi(z,\Bar{z})$. If $\frac{\partial \phi}{\partial \Bar{z}} = 0$, i.e. $\phi$ depends only on $z$, then $\phi$ is said to be a \textit{chiral field}. We thus simply write $\phi(z)$, which is holomorphic i.e. a power series in $z$. On the other hand, if $\frac{\partial \phi}{\partial z} = 0$, we call $\phi$ \textit{anti-chiral} and write $\phi(\Bar{z})$, which is anti-holomorphic i.e. a power series in $\Bar{z}$.
    
    We are interested in the infinitesimal conformal transformation $f(z) = z + \epsilon(z)$ ($\Bar{f}(\Bar{z}) = \Bar{z} + \Bar{\epsilon}(\Bar{z})$) with $| \epsilon(z) | \ll 1$ ($| \Bar{\epsilon}(\Bar{z}) | \ll 1 $) where
    \begin{align*}
        \epsilon & = \epsilon^0 + i \epsilon^1 \\
        \Bar{\epsilon} & = \epsilon^0 - i \epsilon^1
    \end{align*}
    satisfying the Cauchy-Riemann conditions
    \begin{align*}
        \frac{\partial}{\partial x^0} \epsilon^0 & = + \frac{\partial}{\partial x^1} \epsilon^1 \\
        \frac{\partial}{\partial x^0} \epsilon^1 & = - \frac{\partial}{\partial x^1} \epsilon^1.
    \end{align*} 
    Note that $\Bar{f}$ is simply notation.
    
    \begin{definition}
    \cite{blum}
        If a field $\phi (z, \Bar{z})$ transforms under any conformal transformation $f(z)$ and $\Bar{f}(\Bar{z})$ as follows:
        $$ \phi'(z,\Bar{z}) = \bigg( \frac{\partial f}{\partial z} \bigg)^h \bigg( \frac{\partial \Bar{f}}{\partial \Bar{z}} \bigg)^{\Bar{h}} \phi \big(f(z), \Bar{f}(\Bar{z}) \big) $$
        we call $\phi(z,\Bar{z})$ a \textit{primary field of conformal dimension} $(h, \Bar{h})$. If not, we call $\phi(z,\Bar{z})$ a \textit{secondary field}.
    \end{definition}
    
    As an example, given a primary field $\phi(z,\Bar{z})$ and the  infinitesimal conformal transformation as discussed above, we compute :
    $$ \frac{\partial f}{\partial z} = 1 + \partial_z \epsilon $$
    $$ \implies \bigg ( \frac{\partial f}{\partial z} \bigg )^h = 1 + h \partial_z \epsilon + o(\epsilon^2) $$
    so that $\phi (z + \epsilon, \Bar{z}) = \phi(z,\Bar{z}) + \epsilon \partial_z \phi(z,\Bar{z}) + o(\epsilon^2)$. Then:
    \begin{align*}
        \phi'(z,\Bar{z}) & = \big ( 1 + h \partial_z \epsilon + o(\epsilon^2) \big ) \big ( 1 + \Bar{h} \partial_{\Bar{z}} \Bar{\epsilon} + o( \Bar{\epsilon}^2 ) \big ) \phi \big (z + \epsilon, \Bar{z} + \Bar{\epsilon} \big )  \\
        & = \big ( 1+ h \partial_z \epsilon + \Bar{h} \partial_{\Bar{z}} \Bar{\epsilon} + o(\epsilon^2) + o(\Bar{\epsilon}^2) \big ) \big ( \phi (z, \Bar{z} + \Bar{\epsilon} ) + \epsilon \partial_z \phi (z, \Bar{z} + \Bar{\epsilon} ) + o(\epsilon^2) \big ) \\
        & = \big ( 1+ h \partial_z \epsilon + \Bar{h} \partial_{\Bar{z}} \Bar{\epsilon} \big ) \big (\phi(z, \Bar{z}) + \epsilon \partial_z \phi(z,\Bar{z}) + \Bar{\epsilon} \partial_{\Bar{z}} \phi (z, \Bar{z}) \big ) \\
        & = \phi(z, \Bar{z}) + \big ( h \partial_z \epsilon + \epsilon \partial_z + \Bar{h} \partial_{\Bar{z}} \Bar{\epsilon} + \Bar{\epsilon} \partial_{\Bar{z}} \big) \phi (z,\Bar{z})
    \end{align*}
    Ignoring terms of order $\epsilon^2$ and $\Bar{\epsilon}^2$ in the above expression, we find that the primary field $\phi (z, \Bar{z})$ is transformed under the infinitesimal conformal transformation
    \begin{equation}
        \big ( h \partial_z \epsilon + \epsilon \partial_z + \Bar{h} \partial_{\Bar{z}} \Bar{\epsilon} + \Bar{\epsilon} \partial_{\Bar{z}} \big) \phi (z,\Bar{z})
    \end{equation}
    For more details see \cite{blum}.
    
    In our current approach, in order to study the central extension of the Witt algebra as discussed in section \ref{witt}, we need to discuss the \textit{energy-momentum tensor}, which is derived as follows (see \cite{blum, bel} for details). Recall N\"oether's theorem which essentially states that for every continuous symmetry in a field theory there is an object called current $j_\mu$ ($\mu = 0,1$) which is conserved, i.e. using Einstein summation notation
    \begin{equation}
    \partial^\mu j_\mu = 0
    \end{equation}
    where $\partial^0 = \frac{\partial}{\partial x^0}$, $\partial^1 = \frac{\partial}{\partial x^1}$.
    For more information, see \cite{blum,noether}. 
    Let $T =
    \begin{pmatrix}
    T_{00} & T_{01} \\
    T_{10} & T_{11}
    \end{pmatrix}$ denote the energy-momentum tensor. Then from \cite{blum}, under the infinitesimal conformal transformation $x^\mu \mapsto x^\mu + \epsilon^\mu (x)$ the current is
    \begin{align*}
        j_\mu = T_{\mu \nu} \epsilon^\nu \implies & j_0 = T_{00} \epsilon^0 + T_{01} \epsilon^1 \\
       \& \ & j_1 = T_{10} \epsilon^0 + T_{11} \epsilon^1
    \end{align*}
    Applying N\"oether's theorem yields
    \begin{align*}
        0 = \partial^\mu (T_{\mu \nu} \epsilon^\nu) \implies & \partial^0 (T_{00} \epsilon^0 + T_{01} \epsilon^1) +  \partial^1 (T_{10} \epsilon^0 + T_{11} \epsilon^1) = 0 \\
    \end{align*}
    Since $\partial^\mu T_{\mu \nu} = 0$, the above expression can be rewritten as
    \begin{equation*}
        T_{00} \partial^0  \epsilon^0  + T_{01} \partial^0  \epsilon^1  + T_{10} \partial^1  \epsilon^0  + T_{11} \partial^1  \epsilon^1 = 0
    \end{equation*}
    or using Einstein summation notation,
    \begin{equation*}
        T_{\mu \nu} \partial^\mu \epsilon^\nu = 0
    \end{equation*}
    Since this expression is true for all conformal transformations, in particular $\epsilon^0 = \epsilon x^0$ and $\epsilon^1 = \epsilon x^1$, then $(T_{00} + T_{11}) \epsilon = 0$ which implies that the energy-momentum tensor is traceless (i.e. $T_{00} + T_{11} = 0$).
    
    We now wish to complexify our coordinates, $x^0 = \frac{z+ \Bar{z}}{2}$ \& $x^1 = \frac{z - \Bar{z}}{2i}$. We make the following association:
    $$ \begin{pmatrix}
    T_{00} & T_{01} \\
    T_{10} & T_{11}
    \end{pmatrix} \longmapsto 
    \begin{pmatrix}
    T_{zz} & T_{z \Bar{z}} \\
    T_{\Bar{z}z} & T_{\Bar{z} \Bar{z}}
    \end{pmatrix} $$
    where
    \begin{align*}
        & T_{zz} = \frac{1}{4}(T_{00} - 2i T_{10} - T_{11} ) \\
        & T_{\Bar{z} \Bar{z}} = \frac{1}{4}(T_{00} + 2i T_{10} - T_{11} ) \\
        & T_{z \Bar{z}} = T_{\Bar{z} z } = \frac{1}{4} (T_{00} + T_{11})
    \end{align*}
    From above, since the energy-momentum tensor is traceless, we have
    \begin{align*}
    T_{zz} & = \frac{1}{2}(T_{00} - i T_{10} )\\
    T_{\Bar{z} \Bar{z}} & = \frac{1}{2}(T_{00} + i T_{10} ) \\
    T_{z \Bar{z}} & = T_{\Bar{z} z } = 0 
    \end{align*}
    
    We now investigate the chirality of the energy-momentum operator:
    \begin{align*}
        \partial_{\Bar{z}} T_{zz} & = \bigg ( \frac{\partial^0 + i \partial^1}{2} \bigg ) \frac{1}{2} (T_{00} - i T_{10} ) \\
        & = \frac{1}{4} \Big (\pao T_{00} + \pai T_{10} + i \big ( \pai T_{00} - \pao T_{10} \big ) \Big ) \\
        & = \frac{1}{4} \Big (\underbrace{\pao T_{00} + \pai T_{10}}_{=0} - i \big ( \underbrace{\pai T_{11} + \pao T_{10}}_{=0} \big ) \Big ) \\
        & = 0
    \end{align*}
    
    It can be similarly shown that $\partial_z T_{\Bar{z} \Bar{z}} = 0$. We thus have that $T(z)$ is chiral and $T(\Bar{z})$ is anti-chiral. We can write $T(z)$ as a Laurent series as follows:
    $$ T(z) = \sum_{n = - \infty}^\infty c_n z^n $$
    where
    $$ c_n = \frac{1}{2 \pi i} \int T(z) z^{-n-1} d z. $$
    With a change of variables, we obtain the desired form of the energy-momentum tensor : 
    $$T(z) =  \sum_{n \in -2 + \mathbb{Z}}  L_n z^{-n-2}$$
    where $L_n = c_{-n-2} = \displaystyle \frac{1}{2 \pi i} \displaystyle \int T(z) z^{n+1} d z $.
    
    \begin{remark}
        $T(z)$ is an example of a secondary field. 
    \end{remark}

\section{The Witt algebra} \label{witt}

\subsection{Construction of the Witt algebra}

    We now begin our application of the topics previously discussed with a specific Lie algebra:
    \begin{definition} (Witt algebra)
    The Witt algebra over $\mathbb{C}^* : = \mathbb{C} \setminus{\{0\}}$ is defined as follows:
    \begin{equation*}
        \witt =\big\{f(z)\frac{d}{dz} | f\in\mathbb{C}[z,z^{-1}]\big\} 
    \end{equation*}
    with a basis given by 
    $$ \big\{L_j :=-z^{j+1}\frac{d}{dz} |\ j\in\mathbb{Z}\big\}$$
    \end{definition}
    \begin{remark}
    $L_j$ can be thought of as a vector field over $\mathbb{C^*}$.
    \end{remark}
    Note that the basis of the Witt algebra can also be interpreted from a Laurent expansion of $\epsilon(z)$ in the infinitesimal conformal transformation $f(z) = z + \epsilon(z)$ about $z=0$ \cite{blum, kohno}:
    \begin{equation*}
        f(z) = z + \sum_{n\in\mathbb{Z}} c_n (-z^{n+1})
    \end{equation*}
    We define the following commutator over the Witt algebra
    \begin{equation} \label{def1}
    \begin{split}
        \Big[f(z)\frac{d}{dz},g(z)\frac{d}{dz}\Big] & = \Big( f(z)\frac{d}{dz} \Big) \Big(g(z)\frac{d}{dz}\Big) - \Big(g(z)\frac{d}{dz}\Big) \Big(f(z)\frac{d}{dz}\Big) \\
        & = f(z)g'(z)\frac{d}{dz}+f(z)g(z)\frac{d^2}{dz^2}-g(z)f'(z)\frac{d}{dz}-g(z)f(z)\frac{d^2}{dz^2} \\
        & = \big(f(z)g'(z)-g(z)f'(z)\big)\frac{d}{dz}
    \end{split}
    \end{equation}
    \begin{proposition}
    The commutator defined above is a Lie bracket
    \end{proposition}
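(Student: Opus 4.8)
\textit{Proof proposal.} The plan is to verify the three defining properties of a Lie bracket — bilinearity over $\mathbb{C}$, the alternating property $[X,X]=0$, and the Jacobi identity — but to organize the argument so that the only genuine computation, the Jacobi identity, is reduced to a standard fact about associative algebras.

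First, recall that for \emph{any} associative algebra $(A,\circ)$ the commutator $[a,b] := a\circ b - b\circ a$ is automatically bilinear, alternating, and satisfies the Jacobi identity, so $(A,[\ ,\ ])$ is a Lie algebra; the Jacobi identity here is the familiar expansion of the cyclic sum $[[a,b],c]+[[b,c],a]+[[c,a],b]$ in which all twelve terms cancel in pairs. I would apply this with $A = \mathcal{D}$, the associative algebra (under composition of operators) of all differential operators $\sum_{k\ge 0} a_k(z)\,\frac{d^k}{dz^k}$ with $a_k \in \mathbb{C}[z,z^{-1}]$; this is well defined since $\mathbb{C}[z,z^{-1}]$ is closed under $\frac{d}{dz}$ and all sums involved are finite, so no convergence issue arises despite $\witt$ being infinite-dimensional.

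The key observation, already recorded in the computation $(\ref{def1})$, is that although $\witt$ is \emph{not} closed under composition — the product $\big(f(z)\frac{d}{dz}\big)\circ\big(g(z)\frac{d}{dz}\big)$ contains the second-order term $f(z)g(z)\frac{d^2}{dz^2}$ — it \emph{is} closed under the commutator: the $\frac{d^2}{dz^2}$ terms cancel, leaving $\big[f(z)\frac{d}{dz},g(z)\frac{d}{dz}\big] = \big(f(z)g'(z)-g(z)f'(z)\big)\frac{d}{dz}\in\witt$. Hence $\witt$ is a linear subspace of $\mathcal{D}$ stable under $[\ ,\ ]$, i.e. a Lie subalgebra of $(\mathcal{D},[\ ,\ ])$, and the restricted bracket is a Lie bracket. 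It remains only to note bilinearity over $\mathbb{C}$ and the alternating property explicitly, both of which are immediate: $(f,g)\mapsto fg'-gf'$ is $\mathbb{C}$-bilinear because $\frac{d}{dz}$ is $\mathbb{C}$-linear and multiplication in $\mathbb{C}[z,z^{-1}]$ is $\mathbb{C}$-bilinear, and $[X,X]=0$ since $ff'-ff'=0$.

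I do not expect a real obstacle. If one prefers a self-contained proof that avoids invoking the ambient algebra $\mathcal{D}$, the only mildly tedious point is a direct check of the Jacobi identity: writing $X=f\frac{d}{dz}$, $Y=g\frac{d}{dz}$, $Z=h\frac{d}{dz}$ and using $[X,Y]=(fg'-gf')\frac{d}{dz}$, one expands $[[X,Y],Z]+[[Y,Z],X]+[[Z,X],Y]$ and verifies, via the product rule, that the coefficient of $\frac{d}{dz}$ — a sum of terms of the shapes $fg'h'$, $fg''h$, and their permutations — vanishes identically. This is pure bookkeeping, and the cancellation is forced by the antisymmetry of the inner brackets together with the cyclic structure of the sum.
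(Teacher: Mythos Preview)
Your argument is correct and takes a genuinely different route from the paper. The paper verifies skew-symmetry directly from the formula $(fg'-gf')\frac{d}{dz}$ and then establishes the Jacobi identity by brute-force expansion: it computes each of the three nested brackets $\big[f\frac{d}{dz},[g\frac{d}{dz},h\frac{d}{dz}]\big]$, $\big[g\frac{d}{dz},[h\frac{d}{dz},f\frac{d}{dz}]\big]$, $\big[h\frac{d}{dz},[f\frac{d}{dz},g\frac{d}{dz}]\big]$ explicitly, obtaining twelve terms of the shapes $fgh''$, $f'gh'$, etc., and observes by inspection that they cancel. Your primary approach instead embeds $\witt$ in the associative algebra $\mathcal{D}$ of differential operators over $\mathbb{C}[z,z^{-1}]$, invokes the general fact that the commutator on any associative algebra is a Lie bracket, and uses the computation already recorded in (\ref{def1}) only to check that $\witt$ is closed under $[\ ,\ ]$ and hence a Lie subalgebra. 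This is cleaner and more conceptual---no term-by-term bookkeeping is needed---and it makes transparent \emph{why} the second-order pieces $fg\frac{d^2}{dz^2}$ were bound to cancel. The paper's direct computation, on the other hand, is entirely self-contained and requires no auxiliary structure beyond $\witt$ itself; your final paragraph sketches exactly that computation as a fallback, which matches what the paper actually does.
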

    \begin{proof}
    In order to be a Lie bracket, the commutator must be skew-symmetric and satisfy the Jacobi identity.
    
    Skew-symmetry is relatively easy to show:
    \begin{equation*}
        \begin{split}
            \Big[f(z)\frac{d}{dz},g(z)\frac{d}{dz}\Big] & = \big(f(z)g'(z)-g(z)f'(z)\big)\frac{d}{dz} \\
            & = -\big(g(z)f'(z)-f(z)g'(z)\big)\frac{d}{dz} \\
            & = -\Big[g(z)\frac{d}{dz},f(z)\frac{d}{dz}\Big] 
        \end{split}
    \end{equation*}
    The Jacobi identity, on the other hand, is not difficult per se, but rather tedious. We wish to show the following:
    \begin{multline*}
        \Big[f(z)\frac{d}{dz},\Big[g(z)\frac{d}{dz},h(z)\frac{d}{dz}\Big]\Big] + \Big[g(z)\frac{d}{dz},\Big[h(z)\frac{d}{dz},f(z)\frac{d}{dz}\Big]\Big] \\ 
        + \Big[h(z)\frac{d}{dz},\Big[f(z)\frac{d}{dz},g(z)\frac{d}{dz}\Big]\Big] = 0
    \end{multline*}
    Examining the first term yields
    \begin{multline*}
        \Big[f(z)\frac{d}{dz},g(z)\frac{d}{dz}\Big] = \Big[f(z)\frac{d}{dz} , \big(g(z)h'(z)-h(z)g'(z)\big)\frac{d}{dz}\Big]  \\
        = f(z)\frac{d}{dz}\Big(g(z)h'(z)-h(z)g'(z)\Big)\frac{d}{dz} -\big(g(z)h'(z)-h(z)g'(z)\big)\frac{d}{dz}  \\
        = f(z)\Big(\big(g'(z)h'(z)+g(z)h''(z)\big)-\big(h'(z)g'(z)+h(z)g''(z)\big)\Big)\frac{d}{dz} \\ 
        - \Big(g(z)h'(z)-h(z)g'(z)\Big)f'(z)\frac{d}{dz} \\
        = \Big(f(z)g(z)h''(z)-f(z)h(z)g''(z)-f'(z)g(z)h'(z)+f'(z)h(z)g'(z)\Big)\frac{d}{dz}
    \end{multline*}
    by the definition of the commutator. Similarly,
    \begin{multline*}
        \Big[g(z)\frac{d}{dz},\Big[h(z)\frac{d}{dz},f(z)\frac{d}{dz}\Big]\Big] \\
        = \Big(g(z)h(z)f''(z)-g(z)f(z)h''(z)-g'(z)h(z)f'(z)+g'(z)f(z)h'(z)\Big)\frac{d}{dz}
    \end{multline*}
    and
    \begin{multline*}
        \Big[h(z)\frac{d}{dz},\Big[f(z)\frac{d}{dz},g(z)\frac{d}{dz}\Big]\Big] \\
        = \Big(h(z)f(z)g''(z)-h(z)g(z)f''(z)-h'(z)f(z)g'(z)+h'(z)g(z)f'(z)\Big)\frac{d}{dz}
    \end{multline*}
    Adding these three expressions, we get
    \begin{equation*}
        \begin{split}
        \Big( & f(z)g(z)h''(z) - f(z)g''(z)h(z) - f'(z)g(z)h'(z) + f'(z)g'(z)h(z) \\
         +  &   f''(z)g(z)h(z) - f(z)g(z)h''(z) - f'(z)g'(z)h(z) + f(z)g'(z)h'(z) \\
         +  &   f(z)g''(z)h(z) - f''(z)g(z)h(z) - f(z)g'(z)h'(z) + f'(z)g(z)h'(z) \Big)\frac{d}{dz}
        \end{split}
    \end{equation*}
    A careful glance shows that this vanishes to zero, meaning
    \begin{multline*}
            \Big[f(z)\frac{d}{dz},\Big[g(z)\frac{d}{dz},h(z)\frac{d}{dz}\Big]\Big] + \Big[g(z)\frac{d}{dz},\Big[h(z)\frac{d}{dz},f(z)\frac{d}{dz}\Big]\Big] \\ 
        + \Big[h(z)\frac{d}{dz},\Big[f(z)\frac{d}{dz},g(z)\frac{d}{dz}\Big]\Big] = 0
    \end{multline*}
    Because $[,]$ is skew-symmetric and satisfies the Jacobi identity, it is a Lie bracket and therefore the Witt algebra is a Lie algebra.
    \end{proof}
    Restricting the vector field to $S^1$ i.e. $z=e^{i\theta}$, the element of the basis $L_n=-z^{n+1}\frac{d}{dz}$ becomes
    \begin{equation*}
        \begin{split}
            L_n & = -e^{(i\theta)(n+1)}\frac{d}{dz} \\
            & = -e^{in\theta}e^{i\theta}\frac{d}{ie^{i\theta}d\theta} (z=e^{i\theta} \implies dz=ie^{i\theta}d\theta) \\
            & = ie^{in\theta}\frac{d}{d\theta}
        \end{split}
    \end{equation*}
    \begin{proposition}
    $[L_m,L_n]=(m-n)L_{m+n}$
    \end{proposition}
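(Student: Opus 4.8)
The plan is to obtain the proposition as an immediate specialization of the general commutator formula (\ref{def1}). Recall that for any $f,g\in\mathbb{C}[z,z^{-1}]$ we have already established
\[
\Big[f(z)\frac{d}{dz},g(z)\frac{d}{dz}\Big]=\big(f(z)g'(z)-g(z)f'(z)\big)\frac{d}{dz},
\]
so it suffices to substitute the functions corresponding to the basis elements. Concretely, I would take $f(z)=-z^{m+1}$ and $g(z)=-z^{n+1}$, which represent $L_m$ and $L_n$ respectively.

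First I would compute the derivatives, $f'(z)=-(m+1)z^{m}$ and $g'(z)=-(n+1)z^{n}$. Then the two products in the bracket are $f(z)g'(z)=(n+1)z^{m+n+1}$ and $g(z)f'(z)=(m+1)z^{m+n+1}$, where in each product the two minus signs coming from the convention $L_j=-z^{j+1}\frac{d}{dz}$ cancel. Subtracting gives
\[
f(z)g'(z)-g(z)f'(z)=\big((n+1)-(m+1)\big)z^{m+n+1}=(n-m)z^{m+n+1}.
\]
Finally I would re-express the result in terms of the basis: since $L_{m+n}=-z^{m+n+1}\frac{d}{dz}$, we get $(n-m)z^{m+n+1}\frac{d}{dz}=-(n-m)L_{m+n}=(m-n)L_{m+n}$, which is exactly the claimed identity.

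There is no conceptual obstacle here; the only thing requiring care is the bookkeeping of signs arising from the convention $L_j=-z^{j+1}\frac{d}{dz}$ — two signs cancel inside each product, but a third sign re-enters when the answer is rewritten via $L_{m+n}$. As a consistency check one may note that the result respects the skew-symmetry of the bracket: interchanging $m$ and $n$ turns $(m-n)L_{m+n}$ into $(n-m)L_{m+n}=-(m-n)L_{m+n}$, as it should. One should also keep in mind that $m+n$ can be any integer, so $L_{m+n}$ is again a legitimate basis element and the formula closes within $\witt$.
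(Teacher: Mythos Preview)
Your argument is correct: you specialize the general commutator formula (\ref{def1}) directly in the $z$-variable with $f(z)=-z^{m+1}$, $g(z)=-z^{n+1}$, and the sign bookkeeping is handled carefully. The paper takes a slightly different route: immediately before the proposition it restricts the basis vectors to $S^1$ via $z=e^{i\theta}$, obtaining $L_n=ie^{in\theta}\frac{d}{d\theta}$, and then computes the bracket in the angular coordinate. Your approach is the more direct one and stays entirely within $\mathbb{C}[z,z^{-1}]$, which has the advantage of not invoking the restriction to the circle at all; the paper's version, by contrast, confirms that the same relation persists after passing to the $\theta$-picture, which is the form most often quoted in the CFT literature. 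Both computations are equally short and ultimately reduce to the same elementary derivative identity.
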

    \begin{proof}
    Using the definition in \ref{def1} and the above value for $L_n$ restricted to $S^1$, we get 
    \begin{equation*}
    \begin{split}
        [L_m,L_n] & = \Big[ie^{im\theta}\frac{d}{d\theta},ie^{in\theta}\frac{d}{d\theta}\Big] \\
        & = -ine^{i\theta(m+n)}\frac{d}{d\theta}+ime^{i\theta(m+n)}\frac{d}{d\theta} \\
        & = (m-n)ie^{i\theta(m+n)}\frac{d}{d\theta} \\
        & = (m-n)L_{m+n}
    \end{split}
    \end{equation*}
    \end{proof}

\subsection{Central extension of the Witt algebra}
    It can be shown that $H^2 (\witt, \mathbb{C})$ is one-dimensional, meaning that in the following exact short sequence:
    \begin{equation}
        0 \rightarrow \mathbb{C}c \rightarrow \mathcal{V} \rightarrow \witt \rightarrow 0
    \end{equation}
    the central extension $\mathcal{V} \simeq \witt \oplus \ \mathbb{C}c$ is unique up to a constant. This unique central extension $\mathcal{V}$ is known as the \textit{Virasoro algebra}. It is spanned by $\{ L_m \ : \ \ m \in \mathbb{Z} \} \cup \{c\}$. The vector $c$ is called the central charge. The bracket operation on $\mathcal{V}$ is defined by
    \begin{equation} \label{Vbrackdef}
        [L_m, L_n]_\mathcal{V} = (m-n) L_{m+n} + \omega (L_m, L_n) c
    \end{equation}
    where $\omega $ is some representative element of the cohomology class of $H^2 (\witt, \mathbb{C})$.  In the next section we will compute this cohomology class using standard results from Kac, which in the mind of these authors fill up the gap that seems to exist in physics literature (see for example \cite{blum}).
    
\section{Computation of cohomology class using conformal field theory}

    This section is adapted from Victor Kac, who develops the theory in much more generality in \cite{kac}. For the sake of continuity in following along \cite{kac}, we use much of the same notation. However, we introduce the term {\it eigenfield} for the Hamiltonian $H$ of conformal weight $\Delta$ (see  definition \ref{eigenfield}) in our discussion.
    
\subsection{Operator product expansion of two eigenfields $a(z)$, $b(w)$ with conformal weights $\Delta$, $\Delta'$}

    Consider a formal field $a(z,w) = \sum_{m,n \in \mathbb{Z}} a_{m,n} z^m w^n \in \mathbb{C}[z, z^{-1}, w, w^{-1} ]$. Here the word ``formal" indicates that we are not concerned with convergence. We also introduce the \textit{formal delta-function} $\delta(z-w)$ defined by
    $$\delta(z-w) := z^{-1} \sum_{n\in \mathbb{Z}} \Big ( \frac{w}{z} \Big )^n. $$
    Given a rational function $R(z,w)$ with poles only at $z=0$, $w=0$, and $|z| = |w|$, let $i_{z,w} R$ (resp. $i_{w,z} R$) denote the power series expansion of $R$ in the domain $|z| > |w|$ (resp. $|w| > |z|$). In particular
    \begin{align}
        i_{z,w} \frac{1}{(z-w)^{j+1}} & = \sum_{m\geq 0}
        \begin{pmatrix}
            m \\
            j
        \end{pmatrix}
        z^{-m-1} w^{m-j} \\
        i_{w,z} \frac{1}{(z-w)^{j+1}} & = - \sum_{m < 0}
        \begin{pmatrix}
            m \\
            j
        \end{pmatrix}
        z^{-m-1} w^{m-j}
    \end{align}
    Using the above we can conclude that
    \begin{align} \label{der}
        \partial^{(j)}_w \delta(z-w) & = i_{z,w} \frac{1}{(z-w)^{j+1}} - i_{w,z} \frac{1}{(z-w)^{j+1}} \\
        & = \sum_{m \in \mathbb{Z}}
        \begin{pmatrix}
            m \\
            j
        \end{pmatrix}
        z^{-m-1} w^{m-j}
    \end{align}
    Recall that the \textit{residue in $z$} of a field $f(z) = \sum_{n \in \mathbb{Z}} f_n z^n$ is defined as
    $$ \res_z a(z) = f_{-1} $$
    
    \begin{proposition} \label{prop1}
    \begin{enumerate}
        \item For any formal field $f(z) \in \mathbb{C}[[z,z^{-1}]]$,
        $$\res_z f(z) \delta(z-w) = f(w) $$
        
        \item $\delta(z-w) = \delta(w-z) $
        
        \item $\partial_z \delta(z-w) = -\partial_w \delta(z-w) $
        
        \item $(z-w) \partial^{(j+1)}_w \delta(z-w) = \partial^{(j)}_w \delta(z-w)$, $j\in \mathbb{Z}_+$
        
        \item $(z-w)^{j+1} \partial^{(j)}_w \delta(z-w) = 0$, $j\in \mathbb{Z}_+$
    \end{enumerate}
    \end{proposition}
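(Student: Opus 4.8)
The plan is to prove all five identities purely formally, i.e. at the level of coefficients of monomials $z^k w^n$, using the explicit expansion $\delta(z-w) = \sum_{n\in\mathbb{Z}} z^{-n-1}w^n$ together with the closed form $\partial_w^{(j)}\delta(z-w) = \sum_{m\in\mathbb{Z}}\binom{m}{j}z^{-m-1}w^{m-j}$ recorded in~(\ref{der}). No convergence is involved: every monomial in $w$ (or in $z$ and $w$) receives contributions from only finitely many terms, so all the reindexings below are legitimate, and each part reduces to a shift of summation index, possibly combined with one elementary binomial identity.

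For part~(1) I would write $f(z) = \sum_m f_m z^m$, multiply by the series for $\delta$, and read off $f(z)\delta(z-w) = \sum_{m,n} f_m\, z^{m-n-1}w^n$; taking $\res_z$ retains exactly the terms with $m-n-1 = -1$, i.e.\ $m = n$, leaving $\sum_n f_n w^n = f(w)$. For part~(2), the substitution $n \mapsto -n-1$ converts $\delta(w-z) = \sum_n w^{-n-1}z^n$ into $\sum_m z^{-m-1}w^m = \delta(z-w)$. For part~(3), differentiating the series term by term gives $\partial_z\delta(z-w) = \sum_n(-n-1)z^{-n-2}w^n$ and $\partial_w\delta(z-w) = \sum_n n\,z^{-n-1}w^{n-1}$, and a single shift $n \mapsto n+1$ in the second series makes $-\partial_w\delta$ agree with $\partial_z\delta$ coefficient by coefficient.

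Parts~(4) and~(5) carry the real content. For~(4) I would multiply the closed form for $\partial_w^{(j+1)}\delta$ by $(z-w)$, split the result into two sums, reindex $m \mapsto m+1$ in the sum produced by the $z$-factor, and collect the coefficient of $z^{-m-1}w^{m-j}$ as $\binom{m+1}{j+1} - \binom{m}{j+1}$, which equals $\binom{m}{j}$ by Pascal's rule (valid for all $m \in \mathbb{Z}$, being a polynomial identity in $m$); this is precisely the closed form for $\partial_w^{(j)}\delta$. For~(5) I would induct on $j$: the base case $j=0$ is the identity $(z-w)\delta(z-w) = 0$, verified directly by the same one-term shift used in~(2); for the inductive step I would write $(z-w)^{j+1}\partial_w^{(j)}\delta = (z-w)^j\big((z-w)\partial_w^{(j)}\delta\big)$ and apply~(4) with $j$ replaced by $j-1$ to collapse the inner factor to $\partial_w^{(j-1)}\delta$, reducing the claim to the case $j-1$.

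I do not anticipate a genuine obstacle here; the statement is essentially bookkeeping. The points that require care are: (i) keeping the divided-power normalization $\partial_w^{(j)} = \tfrac{1}{j!}\partial_w^j$ consistent throughout, so that it is honest Pascal's rule and not a rescaled variant that appears in~(4); (ii) noting that~(4) as stated is available only for $j \in \mathbb{Z}_+$, so the induction in~(5) must bottom out at the separately checked identity $(z-w)\delta(z-w)=0$ rather than being read off from~(4); and (iii) justifying termwise manipulation of doubly-infinite series, which is harmless precisely because each fixed monomial receives only finitely many contributions.
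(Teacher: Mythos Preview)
Your argument is correct. Parts (1)--(3) are handled exactly as in the paper: termwise multiplication, a single reindexing, and termwise differentiation followed by a shift.

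Where you diverge is in (4) and (5). The paper does not work with coefficients there at all; instead it invokes the first line of~(\ref{der}), namely $\partial_w^{(j)}\delta(z-w) = i_{z,w}\frac{1}{(z-w)^{j+1}} - i_{w,z}\frac{1}{(z-w)^{j+1}}$, and observes that multiplication by the polynomial $(z-w)$ (resp.\ $(z-w)^{j+1}$) commutes with each expansion operator $i_{z,w}$, $i_{w,z}$ and simply lowers (resp.\ clears) the pole order of the underlying rational function. That gives (4) and (5) in one line each, with no binomial identities and no induction. Your route via the second line of~(\ref{der}) --- Pascal's rule for (4), then induction on $j$ for (5) --- is more hands-on and entirely self-contained: it never appeals to the meaning of $i_{z,w}$ or to the implicit fact that polynomial multiplication intertwines with those expansions. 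The paper's proof is slicker but leans on that formalism; yours is longer but stays at the level of formal coefficients throughout, which matches the spirit of the care points (i)--(iii) you flag at the end.
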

    
    \begin{proof}
    \begin{enumerate}
        \item It is sufficient to check $f(z) = \sum_{n \in \mathbb{Z}} a z^n$:
        \begin{align*}
            & f(z) \delta(z-w) = \Big ( \sum_{n \in \mathbb{Z}} a z^n \Big ) \Big ( \sum_{m \in \mathbb{Z}} w^m z^{-m-1} \Big ) \\
            & = z^{-1} ( \cdots + a w^{-n} + \cdots + a w^{-1} + a + a w + \cdots + a w^n + \cdots ) + \cdots \\
            & \implies \res_z f(z) \delta(z-w) = f(w)
        \end{align*}
        
        \item $\delta(z-w) = \displaystyle \sum_{n\in \mathbb{Z}} z^{-n-1} w^n  = \displaystyle \sum_{m \in \mathbb{Z}} z^m w^{-m-1} = \delta(w-z)$
        
        \item $-\partial_w \delta(z-w) = \displaystyle \sum_{m \in \mathbb{Z}} -m w^{m-1} z^{-m-1} = \displaystyle \sum_{n \in \mathbb{Z}} (-n-1) w^n z^{-n-2} = \partial_z \delta(z-w)$
        
        \item This is an application of equation (\ref{der}):
        \begin{align*}
            (z-w) \partial^{(j+1)}_w \delta(z-w) & = (z-w) \Big ( i_{z,w} \frac{1}{(z-w)^{j+2}} - i_{w,z} \frac{1}{(z-w)^{j+2}} \Big ) \\
            & = i_{z,w} \frac{1}{(z-w)^{j+1}} - i_{w,z} \frac{1}{(z-w)^{j+1}} \\
            & = \partial^{(j)}_w \delta(z-w)
        \end{align*}
        
        \item We again use equation (\ref{der}):
        \begin{align*}
            (z-w)^{j+1} \partial^{(j)}_w \delta(z-w) & = (z-w)^{j+1} \Big (i_{z,w} \frac{1}{(z-w)^{j+1}} - i_{w,z} \frac{1}{(z-w)^{j+1}} \Big ) \\
            & = 0
        \end{align*}
        
    \end{enumerate}
    \end{proof}
    
    We want to know when a formal field
    $$a(z,w) = \sum_{m,n \in \mathbb{Z}} a_{m,n} z^m w^n \in \mathbb{C}[[z,z^{-1}, w, w^{-1}]] $$
    has an expansion of the form
    \begin{equation}
        a(z,w) = \sum_{j=0}^\infty c^j (w) \partial^{(j)}_w \delta(z-w)    
    \end{equation}
    It follows from Proposition \ref{prop1} that
    \begin{equation}
        c^n (w) = \res_z a(z,w) (z-w)^n
    \end{equation}
    
    Let $\mathbb{C}[[z, z^{-1}, w, w^{-1}]]^0$ be the subspace consisting of formal $\mathbb{C}$-valued distributions $a(z,w)$ for which the following series converges:
    \begin{equation}
        \pi a(z,w) := \sum_{j=0}^\infty ( \res_z a(z,w) (z-w)^j ) \partial_w^{(j)} \delta (z-w)
    \end{equation}
    
    \begin{proposition}
    \begin{enumerate}
        \item The operator $\pi$ is a projector, i.e. $\pi^2 = \pi$.

        \item 
        $\Ker \pi = \big \{ a(z,w) \in \mathbb{C}[[z, z^{-1}, w, w^{-1}]]^0 \ \text{which are holomorphic in} \ z \big \} $.
        
        \begin{remark}
            Recall that a complex function $f(z)$ is holomorphic if in some neighborhood of its domain $f(z) = \sum_{n=0}^\infty a_n z^n$ where $a_i \in \mathbb{C}$.
        \end{remark}
        
        \item Any formal field $a(z,w)$ from $\mathbb{C}[[z, z^{-1}, w, w^{-1}]]^0$ is uniquely represented in the form:
        \begin{equation}
            a(z,w) = \sum_{j=0}^\infty c^j (w) \partial^{(j)}_w \delta(z-w) + b(z,w)
        \end{equation}
        where $b(z,w)$ is a formal field holomorphic in $z$. 
    \end{enumerate}
    \end{proposition}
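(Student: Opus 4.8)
The plan is to treat the three parts in the order stated, since each relies on the structure set up by the residue formula $c^n(w) = \res_z a(z,w)(z-w)^n$ and on Proposition \ref{prop1}. For part (1), I would compute $\pi(\pi a)$ directly: writing $\pi a(z,w) = \sum_{j\geq 0} c^j(w)\,\partial_w^{(j)}\delta(z-w)$, I need $\res_z\big(\pi a(z,w)(z-w)^n\big)$. Using bilinearity, this reduces to evaluating $\res_z\big(c^j(w)\,\partial_w^{(j)}\delta(z-w)\,(z-w)^n\big)$ for each $j$. By part (5) of Proposition \ref{prop1}, $(z-w)^{j+1}\partial_w^{(j)}\delta(z-w)=0$, so the term vanishes whenever $n > j$; by iterating part (4), $(z-w)^n\partial_w^{(j)}\delta(z-w) = \partial_w^{(j-n)}\delta(z-w)$ when $n \leq j$; and then part (1) of Proposition \ref{prop1} gives $\res_z$ of that against the scalar-in-$z$ coefficient. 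The upshot is that only the $j=n$ term survives and contributes $c^n(w)$, so $\pi(\pi a) = \pi a$. This is essentially a bookkeeping argument once the delta-function identities are in hand.

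For part (2), the containment $\Ker\pi \supseteq \{a \text{ holomorphic in } z\}$ follows because if $a(z,w) = \sum_{m\geq 0, n} a_{m,n}z^m w^n$ then $\res_z\big(a(z,w)(z-w)^j\big)$ picks out the coefficient of $z^{-1}$ in a series with no negative powers of $z$ (note $(z-w)^j$ is polynomial in $z$), hence is zero for every $j$, so $\pi a = 0$. For the reverse containment, suppose $\pi a = 0$, i.e.\ all coefficients $c^j(w) = \res_z(a(z,w)(z-w)^j)$ vanish. I would expand $(z-w)^j$ by the binomial theorem and show that the vanishing of all these residues forces the coefficient of every $z^{-k}$ with $k\geq 1$ in $a(z,w)$ to vanish — concretely, $\res_z(a(z,w)z^j) = 0$ for all $j\geq 0$ (obtained inductively from the $c^j$ starting at $j=0$, or by a change of the polynomial basis $\{(z-w)^j\}$ versus $\{z^j\}$), which says precisely that $a(z,w)$ has no strictly negative powers of $z$, i.e.\ is holomorphic in $z$.

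Part (3) is then a formal consequence of (1) and (2): given any $a \in \mathbb{C}[[z,z^{-1},w,w^{-1}]]^0$, set $b := a - \pi a$. By part (1), $\pi b = \pi a - \pi^2 a = 0$, so by part (2), $b$ is holomorphic in $z$; and $\pi a = \sum_{j\geq 0} c^j(w)\partial_w^{(j)}\delta(z-w)$ by definition, giving the claimed decomposition. For uniqueness, if $\sum_j c^j(w)\partial_w^{(j)}\delta(z-w) + b(z,w) = \sum_j \tilde c^j(w)\partial_w^{(j)}\delta(z-w) + \tilde b(z,w)$ with $b,\tilde b$ holomorphic in $z$, apply $\pi$ to both sides: the holomorphic parts die by part (2), $\pi$ fixes each delta-function sum by the computation in part (1), and comparing coefficients of $\partial_w^{(j)}\delta(z-w)$ (which are linearly independent over $\mathbb{C}[[w,w^{-1}]]$, as one sees by extracting residues $\res_z(\cdot\,(z-w)^n)$) forces $c^j = \tilde c^j$ for all $j$, hence $b = \tilde b$.

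The main obstacle I anticipate is part (2), specifically the careful argument that $\pi a = 0$ implies holomorphicity: one must be precise that the family $\{(z-w)^j\}_{j\geq 0}$, when paired with $\res_z$, detects exactly the principal part of $a$ in $z$, which amounts to an invertible (unitriangular) change of basis between $\{(z-w)^j\}$ and $\{z^j\}$ in $\mathbb{C}[[w,w^{-1}]][z]$, and to being careful about the convergence hypothesis built into the definition of $\mathbb{C}[[z,z^{-1},w,w^{-1}]]^0$. Parts (1) and (3) are then largely formal given the Proposition \ref{prop1} toolkit.
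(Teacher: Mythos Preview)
Your proposal is correct and follows essentially the same approach as the paper: part (1) via the delta-function identities (4) and (5) of Proposition \ref{prop1} to reduce $\res_z(\pi a\,(z-w)^n)$ to $c^n(w)$, and part (3) as the formal decomposition $\Img\pi\oplus\Ker\pi$. For part (2) your route is slightly cleaner than the paper's: the paper expands $\pi a(z,w)=0$ as a double series and reads off vanishing coefficients somewhat implicitly, whereas you first deduce $c^j(w)=0$ for all $j$ (using the residue extraction from part (1)) and then invoke the unitriangular change of basis between $\{(z-w)^j\}_{j\ge 0}$ and $\{z^j\}_{j\ge 0}$ to conclude $\res_z(a(z,w)z^j)=0$ for all $j\ge 0$, i.e.\ no negative $z$-powers. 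One small point: your appeal to Proposition \ref{prop1}(1) in the step ``only $j=n$ survives'' is slightly imprecise, since that item concerns $\delta(z-w)$ rather than $\partial_w^{(k)}\delta(z-w)$; the vanishing for $k>0$ comes directly from the coefficient $\binom{0}{k}=0$ in the series expansion of $\partial_w^{(k)}\delta(z-w)$, which is exactly what the paper uses.
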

    
    \begin{proof}
    \begin{enumerate}
        \item We want to show $\res_z \pi a(z,w) (z-w)^n = \res_z a(z,w) (z-w)^n$.
        \begin{align*}
            \pi a(z,w) (z-w)^n & = \Big ( \sum_{j=0}^\infty (\res_z a(z,w) (z-w)^j ) \partial^{(j)}_w \delta(z-w) \Big ) (z-w)^n \\
            & = \sum_{j=n}^\infty (\res_z a(z,w) (z-w)^j ) \partial^{(j-n)}_w \delta(z-w) \\
            & = \sum_{j=n}^\infty (\res_z a(z,w) (z-w)^j ) \Big ( \sum_{m \in \mathbb{Z}} \binom{m}{j-n} z^{-m-1} w^{m+n-j} \Big ) \\
            & \implies \res_z \pi a(z,w) (z-w)^n = \res_z a(z,w) (z-w)^n
        \end{align*}
        
        \item Suppose $\pi a(z,w) = 0$. Then
        \begin{align*}
            0 & = \sum_{j=0}^\infty (\res_z a(z,w) (z-w)^j ) \partial^{(j)}_w \delta(z-w) \\
            \implies 0 & = \sum_{j=0}^\infty (\res_z a(z,w) (z-w)^j ) \Big ( \sum_{m = j}^\infty \binom{m}{j} z^{-m-1} w^{m+n-j} \Big ) \\
            \implies 0 & = (\res_z a(z,w) ) \sum_{m = 0}^\infty \binom{m}{0} z^{-m-1} w^m  \\
            & + (\res_z a(z,w) (z-w)) \sum_{m = 1}^\infty \binom{m}{1} z^{-m-1} w^{m-1} + \cdots 
        \end{align*}
        Thus all the coefficients of $z^{-m-1} w^m$, $z^{-m-1} w^{m-1}$, $\cdots$ are zero for all $m\in \mathbb{Z}_{\geq 0}$. Thus $a(z,w)$ is holomorphic. Conversely, if $a(z,w)$ is holomorphic then clearly $\pi a(z,w) = 0$.
        
        \item Since $\pi$ is a projector, $\mathbb{C}[[z, z^{-1}, w, w^{-1}]]^0 = \Img \pi \oplus \Ker \pi$. The claim follows.
    \end{enumerate}
    \end{proof}
    
    \begin{corollary} \label{cor}
        The null space of the operator of multiplication by $(z-w)^N, \ N \geq 1$, in $\mathbb{C}[[z, z^{-1}, w, w^{-1}]]^0$ is
        \begin{equation} \label{null}
            \sum_{j=0}^{N-1} \partial^{(j)}_w \delta (z-w) \mathbb{C}[[w,w^{-1}]]
        \end{equation}
        Any element $a(z,w)$ from (\ref{null}) is uniquely represented in the form
        \begin{equation} \label{OPE}
            a(z,w) = \sum_{j=0}^{N-1} c^j(w) \partial^{(j)}_w \delta(z-w)
        \end{equation}
    \end{corollary}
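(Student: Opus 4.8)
The plan is to derive Corollary \ref{cor} as a direct consequence of the preceding Proposition. First I would observe that if $a(z,w)$ lies in the null space of multiplication by $(z-w)^N$, then in particular $(z-w)^{N}a(z,w) = 0$, so certainly $(z-w)^{N+k}a(z,w)=0$ for all $k\geq 0$; hence $a(z,w)$ is annihilated by multiplication by arbitrarily high powers of $(z-w)$. I claim this forces $a(z,w) \in \mathbb{C}[[z,z^{-1},w,w^{-1}]]^0$ with trivial holomorphic part $b(z,w)=0$. Indeed, if $a$ had a nonzero holomorphic-in-$z$ component $b(z,w)=\sum_{n\geq 0} b_n(w) z^n$, multiplication by $(z-w)^N$ would not kill it (a genuine power series in $z$ times a nonzero polynomial in $z$ with invertible leading behavior cannot vanish), so $b(z,w)=0$. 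Thus by part (3) of the Proposition, $a(z,w) = \sum_{j=0}^\infty c^j(w)\,\partial_w^{(j)}\delta(z-w)$ with $c^j(w) = \res_z a(z,w)(z-w)^j \in \mathbb{C}[[w,w^{-1}]]$.

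Next I would use this expansion together with part (5) of Proposition \ref{prop1}, namely $(z-w)^{j+1}\partial_w^{(j)}\delta(z-w) = 0$, and part (4), $(z-w)\partial_w^{(j+1)}\delta(z-w) = \partial_w^{(j)}\delta(z-w)$, to pin down which coefficients survive. Multiplying the expansion by $(z-w)^N$ and applying part (4) repeatedly (shifting indices down by $N$) gives
\begin{equation*}
    0 = (z-w)^N a(z,w) = \sum_{j=N}^{\infty} c^j(w)\,\partial_w^{(j-N)}\delta(z-w) = \sum_{k=0}^\infty c^{k+N}(w)\,\partial_w^{(k)}\delta(z-w).
\end{equation*}
Since the $\partial_w^{(k)}\delta(z-w)$ are linearly independent over $\mathbb{C}[[w,w^{-1}]]$ (this is essentially the uniqueness in part (3) of the Proposition, or can be read off directly from equation (\ref{der}) by comparing coefficients of $z^{-m-1}w^{m-k}$), we conclude $c^{k+N}(w)=0$ for all $k\geq 0$, i.e. $c^j(w)=0$ for $j\geq N$. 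Hence $a(z,w) = \sum_{j=0}^{N-1} c^j(w)\,\partial_w^{(j)}\delta(z-w)$, and conversely every such sum is annihilated by $(z-w)^N$ by part (5) of Proposition \ref{prop1} applied to each term (each $\partial_w^{(j)}\delta$ with $j \leq N-1$ is killed by $(z-w)^{j+1}$, hence by $(z-w)^N$). This establishes the description (\ref{null}) of the null space, and uniqueness of the representation (\ref{OPE}) follows from the linear independence of the $\partial_w^{(j)}\delta(z-w)$ just invoked.

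The step I expect to be the main (if modest) obstacle is justifying rigorously that multiplication by $(z-w)^N$ genuinely has no kernel on the holomorphic-in-$z$ part — that is, the reduction to $b(z,w)=0$. In the formal-distribution setting one must be slightly careful, since $\mathbb{C}[[z,z^{-1},w,w^{-1}]]$ is not a ring under multiplication and infinite cancellations can occur; the cleanest route is to note that on the subspace of series holomorphic in $z$ (genuine power series $\sum_{n\geq 0} b_n(w)z^n$, a subset where multiplication by the polynomial $(z-w)^N$ is well-defined and corresponds to an honest operation on $\mathbb{C}[[w,w^{-1}]][[z]]$), the lowest-order coefficient in $z$ of $(z-w)^N b(z,w)$ is $(-w)^N b_0(w)$, and proceeding inductively in the $z$-degree forces all $b_n(w)=0$. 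Everything else is bookkeeping with the identities already proved in Proposition \ref{prop1} and the splitting $\mathbb{C}[[z,z^{-1},w,w^{-1}]]^0 = \Img\pi \oplus \Ker\pi$.
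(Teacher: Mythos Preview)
Your proposal is correct and follows essentially the same route as the paper: multiply the expansion $\sum_{j\geq 0} c^j(w)\partial_w^{(j)}\delta(z-w)$ by $(z-w)^N$, use Proposition~\ref{prop1}(4) to shift indices, deduce $c^j(w)=0$ for $j\geq N$ from linear independence, and obtain the converse from Proposition~\ref{prop1}(5). The paper's proof is terser and simply starts from an element already written as $\sum_{j\geq 0} c^j(w)\partial_w^{(j)}\delta(z-w)$, silently taking for granted that the holomorphic-in-$z$ part $b(z,w)$ vanishes; your explicit treatment of this point (via the lowest-order-in-$z$ coefficient argument, or equivalently via the direct sum $\Img\pi\oplus\Ker\pi$ being preserved by multiplication by $(z-w)^N$) fills a small gap the paper leaves implicit.
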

    
    \begin{proof}
        Suppose $(z-w)^N \sum_{j=0}^\infty c^j(w) \partial^{(j)}_w \delta(z-w) = 0$. Then
        \begin{align*}
            0 & = \sum_{j=N}^\infty c^j (w) \partial^{(j-N)}_w \delta(z-w) \\
            \implies c^N (w) & = c^{N+1}(w) = \cdots = 0
        \end{align*}
        Conversely, that $\sum_{j=0}^{N-1} \partial^{(j)}_w \delta (z-w) \mathbb{C}[[w,w^{-1}]]$ lies in the null space of $(z-w)^N$ follows by Proposition (\ref{prop1}) (5). 
    \end{proof}
    
    We sometimes write a formal field in the form
    \begin{equation}
        a(z) = \sum_{n \in \mathbb{Z}} a_n z^{-n-1}, \ a(z,w) = \sum_{m,n\in \mathbb{Z}} a_{m,n} z^{-m-1} w^{-n-1}
    \end{equation}
    Here $a_n = \res_z a(z) z^n$.
    
    \begin{proposition} \label{coeff}
        If a(z,w) has the expansion (\ref{OPE}) then
        $$ a_{m,n} = \sum_{j=0}^{N-1} \binom{m}{j} c^j_{m+n-j} $$
    \end{proposition}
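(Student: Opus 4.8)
The plan is to expand both sides of the claimed identity in powers of $z$ and $w$ and match coefficients, using the formula for $\partial_w^{(j)}\delta(z-w)$ established in equation (\ref{der}). Recall that we are writing the formal field in the convention $a(z,w) = \sum_{m,n\in\mathbb{Z}} a_{m,n} z^{-m-1} w^{-n-1}$, and similarly each $c^j(w) = \sum_{k\in\mathbb{Z}} c^j_k w^{-k-1}$. So the right-hand side of (\ref{OPE}) becomes a triple sum: substitute $\partial_w^{(j)}\delta(z-w) = \sum_{\ell\in\mathbb{Z}} \binom{\ell}{j} z^{-\ell-1} w^{\ell-j}$ from (\ref{der}), multiply by the Laurent series for $c^j(w)$, and collect.

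First I would write
\begin{equation*}
    a(z,w) = \sum_{j=0}^{N-1} \Big(\sum_{k\in\mathbb{Z}} c^j_k w^{-k-1}\Big)\Big(\sum_{\ell\in\mathbb{Z}} \binom{\ell}{j} z^{-\ell-1} w^{\ell-j}\Big).
\end{equation*}
The coefficient of $z^{-m-1}$ on the right forces $\ell = m$, leaving $\sum_{j=0}^{N-1}\binom{m}{j} c^j_k w^{-k-1+m-j}$. Then the coefficient of $w^{-n-1}$ forces $-k-1+m-j = -n-1$, i.e. $k = m+n-j$. Substituting this value of $k$ gives precisely
\begin{equation*}
    a_{m,n} = \sum_{j=0}^{N-1}\binom{m}{j} c^j_{m+n-j},
\end{equation*}
which is the claim. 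This is essentially a bookkeeping computation once the expansion (\ref{der}) is in hand; alternatively one can derive it more invariantly from $a_{m,n} = \res_z\res_w a(z,w) z^m w^n$ together with Proposition \ref{prop1}(1) applied repeatedly, but the direct coefficient comparison is cleaner.

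The only point requiring a little care — and the closest thing to an obstacle — is justifying the interchange of the (potentially infinite) sums over $j$, $k$, $\ell$ so that extracting a single coefficient $a_{m,n}$ is legitimate. Here this is harmless: the outer sum over $j$ is finite (it runs only to $N-1$), and for each fixed $j$ the product of the two Laurent series is a well-defined formal distribution in $\mathbb{C}[[z,z^{-1},w,w^{-1}]]$, so the coefficient of any fixed monomial $z^{-m-1}w^{-n-1}$ is a finite sum. Thus no convergence issue arises, and the coefficient comparison above is rigorous as stated.
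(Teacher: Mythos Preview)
Your proof is correct and follows essentially the same approach as the paper's: both substitute the expansion $\partial_w^{(j)}\delta(z-w) = \sum_{m\in\mathbb{Z}}\binom{m}{j} z^{-m-1} w^{m-j}$ from (\ref{der}) together with the Laurent expansion of $c^j(w)$, then match coefficients of $z^{-m-1}w^{-n-1}$. Your extra paragraph on why the coefficient extraction is legitimate (finiteness of the $j$-sum) is a nice touch that the paper omits.
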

    
    \begin{proof}
    Let 
    \begin{align*}
        a(z,w) & = \sum_{j=0}^{N-1} c^j (w) \partial^{(j)} \delta(z-w)\\
        & = \sum_{j=0}^{N-1} \sum_{m \in \mathbb{Z}} c^j(w) \binom{m}{j} z^{-m-1} w^{m-j}
    \end{align*}
    Expand $c^j(w)$ as 
    \begin{equation*}
        c^j(w) = \sum_{n\in \mathbb{Z}} c^j_n w^{-n-1}
    \end{equation*}
    Then 
    \begin{align*}
        a(z,w) & = \sum_{j=0}^{N-1} \Big ( \sum_{n\in\mathbb{Z}} c^j_n w^{-n-1} \Big ) \Big ( \sum_{m \in \mathbb{Z}} \binom{m}{j} z^{-m-1} w^{m-j} \Big ) \\
        & = \sum_{m,n\in \mathbb{Z}} \sum_{j=0}^{N-1} \binom{m}{j} c^j_{m+n-j} z^{-m-1} w^{-n-1} \\
        \implies a_{m,n} & = \sum_{j=0}^{N-1} \binom{m}{j} c^j_{m+n-j}
    \end{align*}
    \end{proof}
    
    \begin{definition}
    A field $a(z,w)$ is said to be \textit{local} if for some $N \gg 0$ 
    \begin{equation} \label{local}
    (z-w)^N a(z,w) = 0.
    \end{equation}
    \end{definition}
    
    Corollary \ref{cor} says that any local formal field $a(z,w)$ has the expansion (\ref{OPE}).
    
    \begin{definition}
        Two formal fields $a(z)$ and $b(z)$ are said to be \textit{mutually local}, \textit{simply local}, or a \textit{local pair} if the formal field $[a(z),b(w)] \in \mathbb{C}[[z,z^{-1},w,w^{-1}]]$ is local, i.e. if
        \begin{equation}
            (z-w)^N [a(z), b(w)] = 0 \ \text{for} \ N \gg 0
        \end{equation}
    \end{definition}
    Given a formal field $a(z) = \sum_{n\in \mathbb{Z}}$, let
    $$a(z)_- = \sum_{n\geq 0} a_n z^{-n-1}, \ a(z)_+ = \sum_{n<0} a_n z^{-n-1}. $$
    This is the only way to break $a(z)$ into a sum of "positive" and "negative" parts such that $(\partial a(z)_{\pm} ) = \partial (a(z)_\pm)$
    We re-define the formal field  $a(z)b(w)$ using the "positive" and "negative" parts as follows,
    \begin{equation}
        : a(z) b(w) : = a(z)_+ b(w) + b(w) a(z)_-.
    \end{equation}
    
    \begin{proposition}
    \begin{align} \label{brakeqn} 
        a(z) b(w) &= [a(z)_-, b(w)] + :a(z) b(w): \\
        b(w) a(z) &= - [a(z)_+, b(w)] + :a(z) b(w):
    \end{align}
    \end{proposition}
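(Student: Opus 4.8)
The plan is to prove both identities by the direct method of splitting $a(z)$ into its ``positive'' and ``negative'' parts and then rearranging, using nothing more than the defining decomposition $a(z) = a(z)_- + a(z)_+$ and the definition of the normally-ordered product. As formal fields in the two independent variables $z$ and $w$ this decomposition gives
\[
 a(z)b(w) = a(z)_- b(w) + a(z)_+ b(w), \qquad b(w)a(z) = b(w)a(z)_- + b(w)a(z)_+ .
\]

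For the first identity I would rewrite the first summand on the left-hand expression as $a(z)_- b(w) = [a(z)_-, b(w)] + b(w) a(z)_-$ and substitute it back; the two remaining terms $b(w)a(z)_- + a(z)_+ b(w)$ are, by definition, exactly $:a(z)b(w):$, which yields $a(z)b(w) = [a(z)_-, b(w)] + :a(z)b(w):$. For the second identity I would instead rewrite $b(w)a(z)_+ = a(z)_+ b(w) - [a(z)_+, b(w)]$ and substitute; once more $b(w)a(z)_- + a(z)_+ b(w) = :a(z)b(w):$, and hence $b(w)a(z) = -[a(z)_+, b(w)] + :a(z)b(w):$.

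The computation is entirely formal and presents no real obstacle: every product that appears is a product of a one-variable formal field in $z$ with a one-variable formal field in $w$, hence well-defined coefficient-by-coefficient, and the only facts used are the additivity $a(z) = a(z)_- + a(z)_+$ and the definition of the normal ordering; the compatibility $\partial(a(z)_\pm) = (\partial a(z))_\pm$ is not needed here. As a consistency check, subtracting the second identity from the first reproduces the expected decomposition $[a(z),b(w)] = [a(z)_-, b(w)] + [a(z)_+, b(w)]$.
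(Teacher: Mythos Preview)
Your proof is correct and follows essentially the same direct approach as the paper: unpack the definitions of the commutator and of $:a(z)b(w):$ and add them. The paper only writes out the first identity explicitly, whereas you also supply the (equally short) verification of the second, but the method is identical.
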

    
    \begin{proof}
    \begin{align*}
        [a(z)_-, b(w)] & = a(z)_- b(w) - b(w) a(z)_- \\
        : a(z) b(w) : & = a(z)_+ b(w) + b(w) a(z)_- \\
        \implies [a(z)_-, b(w)] & + :a(z) b(w): = a(z)_- b(w) + a(z)_+ b(w) \\
        & = a(z) b(w)
    \end{align*}
    \end{proof}
    
    With this new notation in hand we can show the following:
    \begin{proposition} \label{prop3}
    The following are equivalent to \ref{local}:
    \begin{enumerate}
        \item $[a(z), b(w)] = \displaystyle \sum_{j=0}^{N-1} \partial^{(j)}_w \delta(z-w) c^j (w)$, where $c^j (w) \in \mathbb{C}[[w,w^{-1}]]$
        
        \item $[a(z)_-, b(w)] = \displaystyle \sum_{j=0}^{N-1} \bigg ( i_{z,w} \displaystyle \frac{1}{(z-w)^{j+1}} \bigg ) c^j (w)$, \\
        $-[a(z)_+, b(w)] = \displaystyle \sum_{j=0}^{N-1} \bigg ( i_{w,z} \displaystyle \frac{1}{(z-w)^{j+1}} \bigg ) c^j (w)$
        
        \item $a(z)b(w) = \displaystyle \sum_{j=0}^{N-1} \bigg ( i_{z,w} \displaystyle \frac{1}{(z-w)^{j+1}} \bigg ) c^j (w) + :a(z)b(w):$ \\
        $ b(w)a(z) = \displaystyle \sum_{j=0}^{N-1} \bigg ( i_{w,z} \displaystyle \frac{1}{(z-w)^{j+1}} \bigg ) c^j (w) + :a(z)b(w):$
        
        \item  $[a_m, b_n] = \displaystyle \sum_{j=0}^{N-1} \binom{m}{j}c^j_{m+n-j}, \ m,n \in \mathbb{Z}$ 

        \item $[a_m, b(w)] = \displaystyle \sum_{j=0}^{N-1} \binom{m}{j}c^j(w) w^{m-j}, \ m \in \mathbb{Z} $

    \end{enumerate}
    \end{proposition}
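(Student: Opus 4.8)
The plan is to use condition (1) as a hub: I will prove $(\ref{local})\Leftrightarrow(1)$, then $(1)\Leftrightarrow(2)\Leftrightarrow(3)$, and finally $(1)\Leftrightarrow(4)\Leftrightarrow(5)$; since every link is an ``if and only if,'' this yields the full equivalence. The step $(\ref{local})\Leftrightarrow(1)$ is essentially a restatement of Corollary \ref{cor}: locality says that $[a(z),b(w)]$ lies in the null space (\ref{null}) of multiplication by $(z-w)^N$, which by Corollary \ref{cor} is precisely the set of formal fields admitting the expansion (\ref{OPE}) with coefficients $c^j(w)\in\mathbb{C}[[w,w^{-1}]]$; conversely, any such field is annihilated by $(z-w)^N$ by Proposition \ref{prop1}(5). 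The only cosmetic point is that $c^j(w)$ and $\partial_w^{(j)}\delta(z-w)$ may be written in either order, which is harmless since multiplication of formal series in $w$ is commutative.

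For $(1)\Leftrightarrow(2)$ I would split the commutator via $a(z)=a(z)_-+a(z)_+$ together with the identities (\ref{brakeqn}), which give $[a(z),b(w)]=[a(z)_-,b(w)]+[a(z)_+,b(w)]$. Here $[a(z)_-,b(w)]$ contains only strictly negative powers of $z$ (it is ``singular in $z$'') while $[a(z)_+,b(w)]$ contains only non-negative powers of $z$ (``regular in $z$''), so this is the \emph{unique} decomposition of $[a(z),b(w)]$ into a singular and a regular part in $z$. On the other side, equation (\ref{der}) writes $\partial_w^{(j)}\delta(z-w)=i_{z,w}\frac{1}{(z-w)^{j+1}}-i_{w,z}\frac{1}{(z-w)^{j+1}}$, where $i_{z,w}\frac{1}{(z-w)^{j+1}}$ is singular in $z$ and $-i_{w,z}\frac{1}{(z-w)^{j+1}}$ is regular in $z$, and multiplying by the $w$-only series $c^j(w)$ does not mix these. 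Thus (1) is equivalent to the two matching conditions ``singular parts agree'' and ``regular parts agree,'' which are exactly the two identities of (2). Then $(2)\Leftrightarrow(3)$ is immediate: add $:a(z)b(w):$ to each line of (2) and invoke (\ref{brakeqn}) again, namely $a(z)b(w)=[a(z)_-,b(w)]+:a(z)b(w):$ and $b(w)a(z)=-[a(z)_+,b(w)]+:a(z)b(w):$.

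For $(1)\Leftrightarrow(4)$ I would expand $[a(z),b(w)]=\sum_{m,n\in\mathbb{Z}}[a_m,b_n]z^{-m-1}w^{-n-1}$, so that its $(m,n)$-coefficient is $[a_m,b_n]$, and apply Proposition \ref{coeff} to its expansion (\ref{OPE}); this gives $[a_m,b_n]=\sum_{j=0}^{N-1}\binom{m}{j}c^j_{m+n-j}$. Conversely, the computation in the proof of Proposition \ref{coeff} is reversible, so from (4) the series reassembles into (1) with $c^j(w)=\sum_{n}c^j_n w^{-n-1}$. Finally $(4)\Leftrightarrow(5)$ follows by forming $[a_m,b(w)]=\sum_n[a_m,b_n]w^{-n-1}$ and substituting $k=m+n-j$, which turns the right-hand side into $\sum_{j=0}^{N-1}\binom{m}{j}w^{m-j}\sum_k c^j_k w^{-k-1}=\sum_{j=0}^{N-1}\binom{m}{j}c^j(w)w^{m-j}$; reading off the coefficient of $w^{-n-1}$ inverts this.

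\textbf{Main obstacle.} The delicate step is $(1)\Leftrightarrow(2)$: one must argue carefully that $[a(z)_-,b(w)]$ and $[a(z)_+,b(w)]$ are \emph{exactly} the singular-in-$z$ and regular-in-$z$ components, that the $i_{z,w}$ and $i_{w,z}$ expansions sort the same way, and that all rearrangements of these formal distributions are legitimate — the last point being fine precisely because the sum over $j$ runs over the finite range $0\le j\le N-1$. Everything else reduces to bookkeeping with binomial coefficients and the delta-function identities of Proposition \ref{prop1}.
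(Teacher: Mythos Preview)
Your proposal is correct and follows essentially the same route as the paper: Corollary \ref{cor} for $(\ref{local})\Leftrightarrow(1)$, the splitting of $\partial_w^{(j)}\delta(z-w)$ via (\ref{der}) and bilinearity for $(1)\Rightarrow(2)$, equation (\ref{brakeqn}) for $(2)\Leftrightarrow(3)$, Proposition \ref{coeff} for $(1)\Rightarrow(4)$, and the index substitution $n\mapsto k+j-m$ for $(4)\Rightarrow(5)$. If anything, you are more careful than the paper, which writes ``The claim follows'' at the end of item (2) without spelling out the uniqueness of the singular/regular-in-$z$ decomposition that you correctly identify as the point needing justification, and which does not explicitly argue the reverse implications you sketch.
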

    
    \begin{proof}
    \begin{enumerate}
        \item This is a clear result of Corollary (\ref{cor}).
        
        \item By (1), 
        \begin{align*}
            [a(z),b(w)] & = \sum_{j=0}^{N-1} \partial^{(j)}_w \delta(z-w) c^j (w)  \\
            & = \sum_{j=0}^{N-1} \sum_{m\geq 0 } \binom{m}{j} z^{-m-1} w^{m-j} + \sum_{j=0}^{N-1} \sum_{m < 0} \binom{m}{j} z^{-m-1} w^{m-j}
        \end{align*}
        Using the bilinearity of the bracket operation, $[a(z),b(w)] = [a(z)_-,b(w)] + [a(z)_+, b(w)]$. Thus
        \begin{align*}
            [a(z)_-,b(w)] + [a(z)_+, b(w)] & = \sum_{j=0}^{N-1} \sum_{m\geq 0 } \binom{m}{j} z^{-m-1} w^{m-j} \\
            & + \sum_{j=0}^{N-1} \sum_{m < 0} \binom{m}{j} z^{-m-1} w^{m-j}
        \end{align*}
        The claim follows.
        
        \item By equation (\ref{brakeqn}),
        \begin{align*}
            a(z)b(w) & = [a(z)_-, b(w)] +:a(z)b(w): \\
            & = \sum_{j=0}^{N-1} \bigg ( i_{z,w}  \frac{1}{(z-w)^{j+1}} \bigg ) c^j (w) + :a(z)b(w):
        \end{align*}
        The other case is similar.
        
        \item By (1), $[a(z),b(w)]$ has the expansion \ref{OPE}
        Thus by proposition (\ref{coeff}),
        \begin{equation*}
            [a(z),b(w)] = \sum_{m,n \in \mathbb{Z}} d_{m,n} z^{-m-1} w^{-n-1}
         \end{equation*}
         where
         \begin{equation*}
             d_{m,n} = \sum_{j=0}^{N-1} \binom{m}{j} c^j_{m+n-j}
         \end{equation*}
        By bilinearity of the bracket,
        \begin{align*}
            [a(z), b(w)] & = [\sum_{m\in \mathbb{Z}} a_m z^{-m-1}, \sum_{n\in \mathbb{Z}} b_n w^{-n-1}] \\
            & = \sum_{m,n \in \mathbb{Z}} d_{m,n} z^{-m-1} w^{-n-1} \\
            \implies \sum_{m \in \mathbb{Z}} \sum_{n \in \mathbb{Z}} [a_m,b_n] z^{-m-1} w^{-n-1} & = \sum_{m,n \in \mathbb{Z}} d_{m,n} z^{-m-1} w^{-n-1} \\
            \implies \sum_{m,n \in \mathbb{Z}} [a_m,b_n] z^{-m-1} w^{-n-1} & = \sum_{m,n \in \mathbb{Z}} d_{m,n} z^{-m-1} w^{-n-1} \\
            \implies [a_m,b_n] & = d_{m,n}
        \end{align*}
        
        \item Let
        \begin{equation*}
            [a(z), b(w)] = \sum_{m,n \in \mathbb{Z}} d_{m,n} z^{-m-1} w^{-n-1}
        \end{equation*}
        Then
        \begin{align*}
            [\sum_{m\in \mathbb{Z}} a_m z^{-m-1}, b(w) ] & = \sum_{m\in \mathbb{Z}} [a_m , b(w)]  z^{-m-1} = \sum_{m,n \in \mathbb{Z}} d_{m,n} z^{-m-1} w^{-n-1} \\
            \implies [a_m, b(w)] & = \sum_{j=0}^{N-1} \binom{m}{j}  \sum_{n\in \mathbb{Z}} c^j_{m+n-j} w^{-n-1} 
        \end{align*}
        Recall that $c^j(w) = \sum_{n\in \mathbb{Z}} c^j_n w^{-n-1}$. Replace $n$ by $k+j-m$. Then
        \begin{align*}
            [a_m, b(w)] & = \sum_{j=0}^{N-1} \binom{m}{j}  \sum_{k\in \mathbb{Z}} c^j_{k} w^{-k -j + m - 1} \\ 
            & = \sum_{j=0}^{N-1} \binom{m}{j} w^{m-j} \sum_{k\in \mathbb{Z}} c^j_{k} w^{-k-1} \\
            & = \sum_{j=0}^{N-1} \binom{m}{j} c^j(w) w^{m-j} 
        \end{align*}
    \end{enumerate}
    \end{proof}
    
    Recall that $i_{z,w} \frac{1}{(z-w)^{j+1}}$ denotes the power series expansion of $\frac{1}{(z-w)^{j+1}}$ in the domain $|z| > |w|$. Thus assuming $|z| > |w|$ we can write proposition (\ref{prop3}) (3) simply as
    $$a(z)b(w) = \sum_{j=0}^{N-1} \frac{c^j (w)}{(z-w)^{j+1}} + :a(z)b(w):$$
    or just the singular part:
    \begin{equation} \label{OPE2}
    a(z)b(w) \sim \sum_{j=0}^{N-1} \frac{c^j (w)}{(z-w)^{j+1}}
    \end{equation}
    Equation (\ref{OPE2}) is called the \textit{operator product expansion} (OPE) of $a(z)b(w)$ for $|z| > |w|$.
    
    Let $H$ denote the Hamiltonian, essentially a semi-postive definite self-adjoint operator.
    \begin{definition} \label{eigenfield}
        A formal field $a(z,w)$ is called an \textit{eigenfield} for $H$ of conformal weight $\Delta \in \mathbb{C}$ if
        $$(H - \Delta - z \partial_z - w \partial_w)a = 0 $$
    \end{definition}
    
    We often write an eigenfield $a(z)$ of conformal weight $\Delta$ as
    $$ a(z) = \sum_{n \in \Delta + \mathbb{Z}} a_n z^{-n + \Delta} $$
    In this form the condition of being an eigenfield is equivalent to
    \begin{equation}
        H a_n = -n a_n
    \end{equation}
    \begin{proposition}
        Suppose $a(z)$ and $b(w)$ are eigenfields of conformal weights $\Delta$ and $\Delta'$ respectively. Then
        \begin{enumerate}
            \item $\partial_z a$ is an eigenfield of conformal weight $\Delta +1$.
            
            \item $:a(z)b(w):$ is an eigenfield of conformal weight $\Delta + \Delta'$.
        \end{enumerate}
    \end{proposition}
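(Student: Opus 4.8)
The plan is to verify the defining identity of Definition~\ref{eigenfield} directly in each case, exploiting a single structural fact: $H$ acts only on the operator-valued coefficients of a field (equivalently $[H,a_n]=-na_n$, extended to products of fields as a derivation), whereas the Euler operators $z\partial_z$ and $w\partial_w$ act only on the formal variables. Two consequences will be used throughout without further comment: $H$ commutes with $\partial_z$, with $\partial_w$, and with the truncations $f\mapsto f_\pm$ (none of these shift the mode index); and $z\partial_z$ is ``blind'' to $w$, commutes with truncation in $z$, and symmetrically for $w\partial_w$.

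For part (1), I would rewrite the hypothesis on $a(z)$ as $Ha=(\Delta+z\partial_z)a$, apply $\partial_z$ to both sides, and push $\partial_z$ through using the Leibniz identity $\partial_z\circ(z\partial_z)=(1+z\partial_z)\circ\partial_z$ together with $[\partial_z,H]=0$. This collapses to $H(\partial_z a)=\bigl((\Delta+1)+z\partial_z\bigr)(\partial_z a)$, which is precisely the eigenfield condition at weight $\Delta+1$. (A mode computation gives the same thing in one line: writing $a(z)=\sum_n a_n z^{-n-\Delta}$ with $Ha_n=-na_n$ yields $\partial_z a=\sum_n(-n-\Delta)a_n z^{-n-(\Delta+1)}$, and $H\bigl((-n-\Delta)a_n\bigr)=-n(-n-\Delta)a_n$, so the new coefficients again satisfy the eigenvalue relation.)

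For part (2), set $c(z,w):=\,:a(z)b(w):\,=a(z)_+b(w)+b(w)a(z)_-$. First I would record that $H$ is a derivation on products and commutes with $\pm$-truncation, giving $Hc=\,:(Ha)b:+:a(Hb):$; likewise, since $z\partial_z$ affects only the $z$ factor (and commutes with truncation in $z$) while $w\partial_w$ affects only the $w$ factor, one has $(z\partial_z+w\partial_w)c=\,:(z\partial_z a)b:+:a(w\partial_w b):$. Substituting $Ha=(\Delta+z\partial_z)a$ and $Hb=(\Delta'+w\partial_w)b$ into the first identity and using bilinearity of normal ordering then lands on $Hc=(\Delta+\Delta')c+(z\partial_z+w\partial_w)c$, i.e.\ the weight-$(\Delta+\Delta')$ eigenfield condition. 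The bookkeeping here is routine; the one point that genuinely needs care --- the main obstacle --- is pinning down the precise meaning of ``$H$ applied to a (normal-ordered product of) field(s)'' and checking that with that convention $H$ really is a derivation compatible with the $\pm$-decomposition and with normal ordering. It is worth doing this explicitly, starting from $[H,a_n]=-na_n$, before running the computation; one should also note in passing that $:a(z)b(w):$ is a well-defined element of $\mathbb{C}[[z,z^{-1},w,w^{-1}]]$, so that all the manipulations take place in an honest ring.
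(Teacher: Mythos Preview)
Your argument is correct and rests on the same idea as the paper's: $H$ acts as a derivation on products of modes while commuting with $\partial_z$, $\partial_w$, and the $\pm$-truncations. The execution differs in presentation. For (1) the paper works term by term with the mode expansion, computing $\partial_z a$, $z\partial_z^2 a$, and $(\Delta+1)\partial_z a$ separately and checking their sum against $H\partial_z a$ via $Ha_n=-na_n$; your operator identity $\partial_z\circ(z\partial_z)=(1+z\partial_z)\circ\partial_z$ together with $[H,\partial_z]=0$ packages the same computation in one line. For (2) the paper actually argues with the \emph{unordered} product $a(z)b(w)$, derives $(Ha)b+a(Hb)=(\Delta+\Delta'+z\partial_z+w\partial_w)(ab)$, and only at the very end writes ``$:a(z)b(w):=a(z)b(w)$ is an eigenfield'', leaving the passage to normal ordering implicit. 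Your version, which verifies directly that $H$, $z\partial_z$, and $w\partial_w$ all pass through the normal-ordering symbol (using that truncation commutes with each), is the tighter write-up of the same computation.
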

    
    \begin{proof}
    \begin{enumerate}
        \item Let $a(z) = \displaystyle \sum_{n\in -\Delta + \mathbb{Z}} a_n z^{-n - \Delta}$. Then 
        \begin{align*}
            \partial_z a & = \sum_{n\in -\Delta + \mathbb{Z}} (-n-\Delta) a_n z^{-n-\Delta -1} \\
            z \partial^2_z a & = \sum_{n\in -\Delta + \mathbb{Z}} (-n-\Delta-1)(-n-\Delta) a_n z^{-n-\Delta -1} \\
            (\Delta +1) \partial_z a(z) & = \sum_{n\in -\Delta + \mathbb{Z}} (\Delta+1) (-n-\Delta) a_n z^{-n-\Delta -1} \\
            \implies (\Delta +1) \partial_z a(z) + z \partial^2_z a(z) & = \sum_{n\in \Delta + \mathbb{Z}}  n (n+ \Delta) z^{-n-\Delta-1} 
        \end{align*}
        We know $H a_n = -n a_n $. Then
        \begin{align*}
            H \partial_z a(z) & = H \Big ( \sum_{n\in -\Delta + \mathbb{Z}} (-n-\Delta) a_n z^{-n-\Delta -1} \Big )\\
            & = \sum_{n\in -\Delta + \mathbb{Z}} (-n-\Delta) H a_n z^{-n-\Delta-1} \\
            & = \sum_{n\in -\Delta + \mathbb{Z}} -n (-n-\Delta) z^{-n-\Delta-1} \\
            & = (\Delta +1) \partial_z a(z) + z \partial^2_z a(z)
        \end{align*}
        
        \item Consider two eigenfields $a(z) = \sum_{n \in \Delta + \mathbb{Z}} a_n z^{-n-\Delta}$ and $b(w) = \sum_{n\in \Delta' + \mathbb{Z}} b_n w^{-n-\Delta'}$ of conformal weight $\Delta$ and $\Delta'$ respectively.  Thus
        \begin{align*}
            Ha(z) & = ( \Delta + z \partial_z ) a(z) \\
            Hb(w) & = ( \Delta' + w \partial_w) b(w)
        \end{align*}
        Hence
        \begin{align*}
            (Ha(z))b(w) & = \Delta a(z) b(w)+ z (\partial_z a(z)) b(w) \\
            a(z) (Hb(w)) & = \Delta'a(z) b(w) + ( w \partial_w b) a(z) \\
            \implies (Ha(z)) b(w)  + a(z) (Hb(w)) & = (\Delta + \Delta') a(z) b(w) + z \partial_z (a(z) b(w)) + w \partial_w (a(z)b(w))
        \end{align*}
        Since the Hamiltonian acts as a derivation, i.e. $H(a(z)b(w)) = (Ha(z))b(w) + a(z) (Hb(w))$, then $:a(z)b(w): = a(z)b(w)$ is an eigenfield of conformal weight $\Delta + \Delta'$.
    \end{enumerate}
    \end{proof}
    
    \begin{corollary} \label{cor2}
        If $a(z)$ and $b(z)$ are mutually local eigenfield of conformal weights $\Delta$ and $\Delta'$, then in the OPE
        $$ a(z)b(w) \sim \sum_{j=0}^{N-1} \frac{c^j (w)}{(z-w)^{j+1}} $$
        all the summands have the same conformal weight $\Delta + \Delta' $.
    \end{corollary}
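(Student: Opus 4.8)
\emph{Sketch of the argument.} The plan is to first promote the assertion to one about the whole formal field $[a(z),b(w)]$, and then strip off the individual summands using the uniqueness clause of Corollary~\ref{cor}. First I would note that $[a(z),b(w)]$ is itself an eigenfield of conformal weight $\Delta+\Delta'$: the derivation property of the Hamiltonian exploited in the proof of the proposition preceding this corollary shows, applied to $a(z)b(w)$ and again to $b(w)a(z)$, that both satisfy $\big(H-(\Delta+\Delta')-z\partial_z-w\partial_w\big)(\,\cdot\,)=0$, and subtracting gives the same for their commutator. Since $a(z)$ and $b(z)$ are mutually local, Proposition~\ref{prop3}(1) then provides the expansion $[a(z),b(w)]=\sum_{j=0}^{N-1}\partial^{(j)}_w\delta(z-w)\,c^j(w)$ with $c^j(w)\in\mathbb{C}[[w,w^{-1}]]$.

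Next I would record the homogeneity of the building blocks. From the explicit series~(\ref{der}), $z\partial_z\big(z^{-m-1}w^{m-j}\big)=(-m-1)z^{-m-1}w^{m-j}$ and $w\partial_w\big(z^{-m-1}w^{m-j}\big)=(m-j)z^{-m-1}w^{m-j}$, so $(z\partial_z+w\partial_w)\,\partial^{(j)}_w\delta(z-w)=-(j+1)\,\partial^{(j)}_w\delta(z-w)$, and the identical computation gives $(z\partial_z+w\partial_w)\,i_{z,w}\frac{1}{(z-w)^{j+1}}=-(j+1)\,i_{z,w}\frac{1}{(z-w)^{j+1}}$; moreover $H$ annihilates these purely scalar distributions. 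Now I would apply $D:=H-(\Delta+\Delta')-z\partial_z-w\partial_w$ to the expansion of $[a(z),b(w)]$. Using that $H$ acts only through the $w$-modes of $c^j(w)$, that $z\partial_z c^j(w)=0$, and the Leibniz rule for $z\partial_z+w\partial_w$ together with the homogeneity above, the two contributions of size $j+1$ combine to yield
\[
0=D[a(z),b(w)]=\sum_{j=0}^{N-1}\partial^{(j)}_w\delta(z-w)\,\Big[\big(H-(\Delta+\Delta'-j-1)-w\partial_w\big)c^j(w)\Big].
\]
Each bracketed expression lies in $\mathbb{C}[[w,w^{-1}]]$, so the uniqueness statement in Corollary~\ref{cor} forces $\big(H-(\Delta+\Delta'-j-1)-w\partial_w\big)c^j(w)=0$ for every $j$, i.e. $c^j(w)$ is an eigenfield of conformal weight $\Delta+\Delta'-j-1$.

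Finally I would run the same Leibniz computation backwards for a single index $j$: since $i_{z,w}\frac{1}{(z-w)^{j+1}}$ is killed by $H$ and has $(z\partial_z+w\partial_w)$-homogeneity $-(j+1)$, while $(H-(\Delta+\Delta'-j-1)-w\partial_w)c^j(w)=0$, one gets
\[
\big(H-(\Delta+\Delta')-z\partial_z-w\partial_w\big)\Big(i_{z,w}\tfrac{1}{(z-w)^{j+1}}\,c^j(w)\Big)=i_{z,w}\tfrac{1}{(z-w)^{j+1}}\cdot 0=0,
\]
so each summand $\frac{c^j(w)}{(z-w)^{j+1}}$ of the OPE is an eigenfield of conformal weight $\Delta+\Delta'$, which is exactly the claim. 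The only delicate point is the bookkeeping in the middle step: keeping track of which of $H$, $z\partial_z$, $w\partial_w$ acts on $\partial^{(j)}_w\delta(z-w)$ and which acts on $c^j(w)$, and checking that the shift $-(j+1)$ produced by $(z\partial_z+w\partial_w)\partial^{(j)}_w\delta(z-w)$ cancels precisely against the shift in the conformal weight of $c^j(w)$; everything else is either quoted (Proposition~\ref{prop3}(1), Corollary~\ref{cor}, the derivation property of $H$) or a one-line manipulation of~(\ref{der}).
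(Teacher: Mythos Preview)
Your argument is correct and takes a genuinely different route from the paper's. The paper works monomial-by-monomial: writing $a(z)b(w)=\sum_{m,n}a_m b_n\, z^{-m-\Delta}w^{-n-\Delta'}$, it checks directly that $H(a_m b_n)=(-m-n)a_m b_n$ via the derivation property, and that $(\Delta+\Delta'+z\partial_z+w\partial_w)$ acts on each monomial $a_m b_n\, z^{-m-\Delta}w^{-n-\Delta'}$ by the same scalar $(-m-n)$; hence every individual monomial is an eigenfield of weight $\Delta+\Delta'$, and the passage from ``every monomial'' to ``every OPE summand'' is left implicit (it follows by linearity, since any formal sub-sum of same-weight monomials is again an eigenfield of that weight). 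You instead work with the commutator and the $\delta$-expansion: you compute the $(z\partial_z+w\partial_w)$-homogeneity of $\partial^{(j)}_w\delta(z-w)$ once, push the operator $H-(\Delta+\Delta')-z\partial_z-w\partial_w$ through by Leibniz, and invoke the uniqueness clause of Corollary~\ref{cor} to isolate each $j$. This buys you an intermediate conclusion the paper does not state explicitly but in fact uses immediately afterward in Proposition~\ref{prop5}: that $c^j(w)$ is itself an eigenfield of conformal weight $\Delta+\Delta'-j-1$. So your route is a bit longer but more structural, and it makes the weight of each $c^j$ explicit rather than leaving it buried in the monomial decomposition.
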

    
    \begin{proof}
        Let $a(z) = \sum_{m \in -\Delta + \mathbb{Z}} a_m z^{-m-\Delta}$ and $b(w) = \sum_{n \in -\Delta' + \mathbb{Z}} b_n w^{-n-\Delta'}$. We know 
        $$a(z)b(w) = \sum_{\substack{m \in -\Delta +  \mathbb{Z} \\ 
        n \in -\Delta' + \mathbb{Z} }} \alpha_{m,n} z^{-m-\Delta} w^{-n-\Delta'}$$ where $\alpha_{m,n} \in \mathbb{C} $ is an eigenfield of conformal weight $\Delta + \Delta'$. Since the Hamiltonian acts as a derivation and $a(z)$ and $b(w)$ are eigenfields,
        \begin{align*}
            H(a_m b_n) & = H(a_m) b_n + a_m H(b_n) \\
            & = -m a_m b_n - n a_m b_n \\
            & = (-m-n) a_m b_n
        \end{align*}
        On the other hand,
        \begin{align*}
            (\Delta + \Delta' + z\partial_z + w \partial_w) a_m b_n z^{-m-\Delta} w^{-n-\Delta'} & = (\Delta + \Delta')a_m b_n z^{-m-\Delta} w^{-n-\Delta'} \\
            & + (-m-\Delta) a_m b_n z^{-m-\Delta} w^{-n-\Delta'} \\
            & + (-n-\Delta') a_m b_n z^{-m-\Delta} w^{-n-\Delta'} \\
            & = (-m-n) a_m b_n z^{-m-\Delta} w^{-n-\Delta'}
        \end{align*}
        Hence
        \begin{equation*}
            H(a_m b_n z^{-m-\Delta} w^{-n-\Delta'}) = (\Delta + \Delta' + z\partial_z + w \partial_w) a_m b_n z^{-m-\Delta} 
        \end{equation*}
        Thus every term of $a(z)b(w)$ is itself an eigenfield of conformal weight $\Delta + \Delta'$.
    \end{proof}
    
    \begin{proposition} \label{prop5}
        Take $a(z)$, $b(w)$ to be local eigenfields of conformal weight $\Delta$, $\Delta'$ resp., with OPE $a(z)b(w) \sim \displaystyle \sum_{j=0}^{N-1} \frac{c^j (w)}{(z-w)^{j+1}}$. Supposing $c^{N-1}(w) := c$ is constant, then $\Delta + \Delta' \geq N$.
    \end{proposition}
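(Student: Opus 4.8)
The plan is to convert the hypothesis that $c^{N-1}(w)=c$ is constant into an eigenvalue equation for $c$ under the Hamiltonian $H$, and then to invoke the semi-positive definiteness of $H$. First I would recall, from Corollary \ref{cor2}, that every summand of the OPE---in particular the leading term $\frac{c^{N-1}(w)}{(z-w)^N}=\frac{c}{(z-w)^N}$---is an eigenfield for $H$ of conformal weight $\Delta+\Delta'$; by Definition \ref{eigenfield} this says $\big(H-(\Delta+\Delta')-z\partial_z-w\partial_w\big)\frac{c}{(z-w)^N}=0$, where $\frac{1}{(z-w)^N}$ is understood as the power-series expansion $i_{z,w}\frac{1}{(z-w)^N}=\sum_{m\geq 0}\binom{m}{N-1}z^{-m-1}w^{m-N+1}$ used repeatedly above.

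The heart of the argument is then a term-by-term computation. Since $c$ depends on neither $z$ nor $w$, the operators $z\partial_z$ and $w\partial_w$ act only on the monomial $z^{-m-1}w^{m-N+1}$, producing the factors $-m-1$ and $m-N+1$, while $H$ acts only on $c$. Writing $Hc=\lambda c$, the eigenfield equation reduces, for every $m$, to the scalar identity $\lambda-(\Delta+\Delta')-(-m-1)-(m-N+1)=0$; the $m$-dependence cancels and one is left with $\lambda=\Delta+\Delta'-N$. Equivalently and more conceptually: a field independent of $z$ and $w$ is an eigenfield exactly when $H$ applied to it equals its conformal weight times it, and here that weight is $\Delta+\Delta'-N$, so $Hc=(\Delta+\Delta'-N)c$.

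Finally, assuming $c\neq 0$---which is implicit, since otherwise the order of the pole would be at most $N-1$---the element $c$ is a nonzero eigenvector of $H$ with eigenvalue $\Delta+\Delta'-N$. Because $H$ is a semi-positive definite self-adjoint operator, each of its eigenvalues is a non-negative real number, hence $\Delta+\Delta'-N\geq 0$, i.e. $\Delta+\Delta'\geq N$. The one spot that needs care is the grading bookkeeping in the middle step---checking that the $\pm m$ contributions cancel so that a single $m$-independent equation for $\lambda$ survives---whereas the genuinely indispensable, non-computational ingredient is the semi-positive definiteness of $H$; without it one could only conclude that $\Delta+\Delta'-N$ is an eigenvalue of $H$.
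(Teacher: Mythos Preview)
Your proof is correct and follows essentially the same line as the paper's: both start from the eigenfield equation $(H-(\Delta+\Delta')-z\partial_z-w\partial_w)\frac{c}{(z-w)^N}=0$, reduce it to the statement that $H$ has eigenvalue $\Delta+\Delta'-N$ on this term, and then invoke the semi-positive definiteness of $H$. The only cosmetic difference is that you compute $z\partial_z+w\partial_w$ term-by-term on the series $i_{z,w}\frac{1}{(z-w)^N}$, while the paper differentiates $\frac{c}{(z-w)^N}$ directly and uses $\frac{z}{z-w}-\frac{w}{z-w}=1$; your explicit remark that $c\neq 0$ is needed is a small improvement in precision.
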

    
    \begin{proof}
        \begin{align*}
            & (H - \Delta - \Delta' - z \partial_z - w \partial_w) \frac{c}{(z-w)^N}  = 0 \\
            \implies & H \frac{c}{(z-w)^N} = (\Delta + \Delta') \frac{c}{(z-w)^N} - z \frac{Nc}{(z-w)^{N+1}} + w \frac{Nc}{(z-w)^{N+1}} \\
            \implies & H \frac{c}{(z-w)^{N}} = (\Delta + \Delta' - N \Big ( \frac{z}{z-w} - \frac{w}{z-w} \Big ) ) \frac{c}{(z-w)^N} \\
            \implies & H \frac{c}{(z-w)^{N}} = (\Delta + \Delta' - N) \frac{c}{(z-w)^N}
        \end{align*}
        Since $H$ is a semi-positive definite self-adjoint operator, its eigenvalues must be non-negative real numbers. Thus $\Delta + \Delta' \geq N$.
    \end{proof}
    
\subsection{Computing cohomology class using operator product expansion of the energy-momentum tensor}

    Note that the energy-momentum tensor $T(z)$ is a local eigenfield of conformal weight $\Delta = 2$ \cite{blum}.
    \begin{proposition} \label{TTOPE}
    \begin{enumerate}
        \item Let $T(z)$ and $T(w)$ be mutually local eigenfields for H both of conformal weights $\Delta = \Delta' = 2$. Assume $c^{N-1}(w) = \frac{1}{4} c \in \mathbb{C}$ is constant. Then the singular part of the operator product expansion is of the form 
        $$ T(z)T(w) \sim \frac{\frac{c}{2}}{(z-w)^4} + \frac{2 c^1(w)}{(z-w)^2} + \frac{\partial_w c^1 (w)}{z-w} $$
        where each summand is of conformal weight 4.
        
        \item If we assume moreover that $[c,T(z)] = 0$, $[L_{-1}, T(z)] = \partial T(z)$, and $[L_0, T(z)] = (z \partial_z + 2) T(z) $ then
        \begin{equation}
            T(z)T(w) \sim \frac{\frac{1}{2} c}{(z-w)^4} + \frac{2 T(w)}{(z-w)^2} + \frac{\partial T(w)}{z-w}.
        \end{equation}
    \end{enumerate}
    \end{proposition}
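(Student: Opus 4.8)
The plan is to pin down, one coefficient at a time, the fields $c^0(w),\dots,c^{N-1}(w)$ appearing in the general finite OPE of two local eigenfields, using for part (1) the conformal-weight bookkeeping of Section~5.1 together with skew-symmetry of the Lie bracket, and for part (2) the three extra hypotheses together with the mode--field commutator formula of Proposition~\ref{prop3}(5).

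For part (1), since $[T(z),T(w)]$ is local, Corollary~\ref{cor} gives $T(z)T(w)\sim\sum_{j=0}^{N-1}c^j(w)/(z-w)^{j+1}$, and by Corollary~\ref{cor2} every summand has conformal weight $\Delta+\Delta'=4$. I would first compute the conformal weight carried by $\partial^{(j)}_w\delta(z-w)$ (equivalently by $(z-w)^{-(j+1)}$): applying $z\partial_z+w\partial_w$ to the explicit series in~(\ref{der}) gives eigenvalue $-(j+1)$, so this distribution carries weight $j+1$ and hence $c^j(w)$ is an eigenfield of weight $3-j$. Since a nonzero constant has weight $0$, the hypothesis that $c^{N-1}(w)$ is a nonzero constant forces $3-(N-1)=0$, i.e.\ $N=4$ (consistently with the upper bound $N\le\Delta+\Delta'=4$ of Proposition~\ref{prop5}). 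Next I would impose skew-symmetry $[T(z),T(w)]=-[T(w),T(z)]$: writing both sides in the $\partial^{(j)}_w\delta$-form of Proposition~\ref{prop3}(1), relabelling $z\leftrightarrow w$, and using Proposition~\ref{prop1}(2)--(5) (namely $\delta(z-w)=\delta(w-z)$, $\partial_z\delta=-\partial_w\delta$, and $(z-w)^{\ell}\partial^{(j)}_w\delta=\partial^{(j-\ell)}_w\delta$, which vanishes for $\ell>j$) together with the formal Taylor expansion $c^j(z)=\sum_{\ell\ge0}\frac{(z-w)^{\ell}}{\ell!}\partial_w^{\ell}c^j(w)$, and then matching the coefficient of each $\partial^{(i)}_w\delta(z-w)$, I expect to obtain
\begin{equation*}
c^i(w)=\sum_{j\ge i}(-1)^{j+1}\frac{\partial_w^{\,j-i}c^j(w)}{(j-i)!}.
\end{equation*}
With $N=4$ and $c^{N-1}=c^3$ constant this collapses to $c^2(w)=\frac{1}{2}\partial_w c^3(w)=0$ and $c^0(w)=\frac{1}{2}\partial_w c^1(w)$; substituting back, with the leading coefficient carried through as the prescribed constant, gives the displayed form, and the three surviving summands have weights $0+4$, $2+2$ and $3+1$, all equal to $4$, as claimed.

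For part (2), I would identify $c^1(w)$ with $T(w)$ using the hypotheses. Writing $T(z)=\sum_m L_m z^{-m-2}$, so that $L_m$ is the mode $a_{m+1}$ in the convention $a(z)=\sum_n a_n z^{-n-1}$, Proposition~\ref{prop3}(5) with $a=b=T$ reads $[L_m,T(w)]=\sum_{j=0}^{3}\binom{m+1}{j}c^j(w)\,w^{m+1-j}$. Taking $m=-1$ gives $[L_{-1},T(w)]=c^0(w)$, so the hypothesis $[L_{-1},T(z)]=\partial T(z)$ yields $c^0(w)=\partial_w T(w)$; taking $m=0$ gives $[L_0,T(w)]=w\,c^0(w)+c^1(w)=w\partial_w T(w)+c^1(w)$ (using $c^2=0$), so the hypothesis $[L_0,T(z)]=(z\partial_z+2)T(z)$ forces $c^1(w)=2T(w)$ --- in agreement with the relation $c^0=\frac{1}{2}\partial_w c^1$ from part~(1). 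The hypothesis $[c,T(z)]=0$ then guarantees that the scalar leading coefficient corresponds to a genuine central element, so it is the central charge of the resulting Virasoro algebra; substituting $c^2=0$, $c^1=2T$, $c^0=\partial T$ and this constant into the part~(1) formula gives the claimed OPE. The step I expect to be the main obstacle is the skew-symmetry computation in part~(1): transporting the $\delta$-function derivatives through the change of variables and Taylor-expanding $c^j(z)$ about $w$ while keeping track of all binomial and factorial factors is what actually produces $c^2=0$ and the precise relation between $c^0$ and $c^1$ (hence the numerical coefficients in the final expansion), whereas the two comparisons of mode--field commutators in part~(2) are routine by contrast.
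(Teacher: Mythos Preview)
Your proposal is correct and follows essentially the same route as the paper: in part~(1) you use Proposition~\ref{prop5}/conformal-weight counting to get $N=4$, then impose the symmetry between $T(z)T(w)$ and $T(w)T(z)$ (you phrase it as skew-symmetry of the bracket and work in the $\partial^{(j)}_w\delta$-basis, while the paper swaps $z\leftrightarrow w$ directly in the OPE and Taylor-expands the $c^j(z)$ --- these are the same computation in two notations) to extract $c^2=0$ and $c^0=\tfrac12\partial_w c^1$; in part~(2) both you and the paper plug $m=-1,0$ into Proposition~\ref{prop3}(5) and compare with the hypotheses on $[L_{-1},T]$ and $[L_0,T]$ to identify $c^1$ with (a multiple of) $T$.

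One small bookkeeping point: your unrescaled computation yields $c^1(w)=2T(w)$ and $c^0(w)=\partial_w T(w)$, which substituted into the \emph{raw} OPE $\sum_j c^j(w)/(z-w)^{j+1}$ already gives the coefficients $2T(w)$ and $\partial_w T(w)$ of part~(2); be careful not to substitute these into the already-rescaled displayed formula of part~(1) (the one with $2c^1(w)$ in the numerator), or you would overcount by a factor of~$2$. The paper handles this by an ``up to a constant'' rescaling between the two displays, so the discrepancy is purely notational, not mathematical.
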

    
    \begin{proof}
    \begin{enumerate}
        \item From proposition \ref{prop5} and the assumption, we obtain $N \leq 4$ and $c^3(w) = \frac{1}{2} c$. Then the singular part of the OPE looks like
        \begin{equation} \label{eqn67}
            T(z)T(w) \sim \frac{\frac{1}{2} c}{(z-w)^4} + \frac{c^2(w)}{(z-w)^3} + \frac{c^1 (w)}{(z-w)^2} + \frac{c^0(w)}{z-w}.
        \end{equation}
        Exchanging $z$ and $w$ in equation (\ref{eqn67}) we get
        \begin{equation*}
            T(w)T(z) \sim \frac{\frac{1}{2} c}{(z-w)^4} - \frac{c^2(z)}{(z-w)^3} + \frac{c^1 (z)}{(z-w)^2} - \frac{c^0(z)}{z-w}.
        \end{equation*}
        Applying Taylor's formula expanding about $w$, this becomes
        \begin{align} \label{eqn68}
            T(w)T(z) \sim \frac{\frac{1}{2} c}{(z-w)^4} & - \frac{c^2(w) + \partial_w c^2(w) (z-w) + \frac{1}{2} \partial^2_w c^2(w) (z-w)^2}{(z-w)^3}  \\
            & + \frac{c^1 (w) + \partial_w c^1 (w)(z-w)}{(z-w)^2} - \frac{c^0(w)}{z-w}. \nonumber
        \end{align}
        Due to locality, equations (\ref{eqn68}) and (\ref{eqn67}) are equal. Thus $c^2(w) = 0$.
        The coefficient of $(z-w)^{-1}$ in equation (\ref{eqn67}) is $c^0(w)$, and in equation (\ref{eqn68}) the coefficients of $(z-w)^{-1}$ are $-c^0(w) + \partial_w c^1(w)$. Then $c^0(w) = \frac{1}{2} \partial_w c^1 (w)$. Thus $T(z)T(w)$ can be written as
        $$ T(z) T(w) \sim \frac{\frac{c}{4}}{(z-w)^4} + \frac{c^1(w)}{(z-w)^2} + \frac{ \frac{1}{2} \partial_w c^1 (w)}{z-w} $$
        Thus (up to a constant)
        \begin{equation}
            T(z)T(w) \sim \frac{\frac{c}{2}}{(z-w)^4} + \frac{2 c^1(w)}{(z-w)^2} + \frac{\partial_w c^1 (w)}{z-w}
        \end{equation}

    \item By proposition \ref{prop3} (5), 
    \begin{equation*}
        [L_{m}, T(z)] = \sum_{j=0}^{3} \binom{m + 1}{j} c^j(z) z^{m+1-j}
    \end{equation*}
    Thus
    \begin{align*}
        [L_{-1},T(z)] & = c^0(z) = \partial c^1(z) \\
        [L_0 , T(z)] & = zc^0 (z) + 2 c^1(z) =(  z \partial + 2) c^1 (z)
    \end{align*}
    This along with the assumptions show that $c^1(w) = T(w)$.
    \end{enumerate}
    \end{proof}

    We would now like to consider the commutator bracket operation 
    \begin{align*}
        [L_m, L_n] & = \Big [ \frac{1}{2 \pi i} \int T(z) z^{m+1} d z, \frac{1}{2 \pi i} \int T(w) w^{n+1} d w \Big ] \\
        & = \int \frac{dz}{2 \pi i} z^{m+1} \int \frac{dw}{2 \pi i} w^{n+1} [T(z), T(w)]
    \end{align*}
    In conformal field theory, motivated by equation (\ref{OPE2}), $T(z)T(w)$ only makes sense if $|z|>|w|$ or $|w|>|z|$. This leads us to define the radial ordering of two operators
    $$ T(z) *_R T(w) := 
    \begin{cases}
    T(z)T(w) & \mbox{if} \ |z| > |w| \\
    T(w)T(z) & \mbox{if} \ |w| > |z| 
    \end{cases} $$
    \begin{remark}
    In the physical theory, this radial ordering is related to the ordering of time.
    \end{remark}
    Thus 
    $$[L_m, L_n] = \int \frac{dz}{2 \pi i} z^{m+1} \int \frac{dw}{2 \pi i} w^{n+1}  [T(z) , T(w) ]$$
    $$ = \int_{w \in C(0;r')} \frac{dz}{2 \pi i} z^{m+1} dz T(z) T(w) - \int_{C(0; r'') \setminus {w}} \frac{dw}{2 \pi i} w^{m+1} dw \ T(w) T(z) $$ 
    $$ = \int_{C(w;r)} \frac{dz}{2 \pi i}  \int \frac{dw}{2 \pi i} z^{m+1} w^{n+1}  \big ( T(z) *_R T(w) \big ) $$
    Substituting the $T(z)T(w)$ OPE yields
    \begin{align*}
    [L_m, L_n] & = \int \frac{dw}{2 \pi i} \int \frac{dz}{2 \pi i}z^{m+1}w^{n+1} \Big( \frac{\frac{c}{2}}{(z-w)^4} + \frac{2 T(w)}{(z-2)^2} +\frac{\partial T(w)}{(z-w)} + ... \Big) \\
    & = \int \frac{dw}{2 \pi i} \res \bigg[ z^{m+1} w^{n+1} \Big( \frac{\frac{c}{2}}{(z-w)^4} + \frac{2 T(w)}{(z-2)^2} +\frac{\partial T(w)}{(z-w)} + ... \Big) \bigg]
    \end{align*}
    To evaluate this expression, we must perform a Taylor expansion of $z^{m+1}$ about $w$:
    \begin{multline*}
        z^{m+1} = w^{m+1} + (m+1) w^m (z-w) + \frac{m(m+1)}{2} w^{m-1} (z-w)^2 \\ 
        + \frac{m(m^2-1)}{6} w^{m-2} (z-w)^3 + ...
    \end{multline*}
    We substitute this expansion:
    \begin{multline*}
        [L_m, L_n] = \int \frac{dw}{2 \pi i} \res \bigg[ w^{n+1} \big(w^{m+1} + (m+1) w^m (z-w) + \frac{m(m+1)}{2} w^{m-1} (z-w)^2
        \\ 
        + \frac{m(m^2-1)}{6} w^{m-2} (z-w)^3 + ... \big) \Big( \frac{\frac{c}{2}}{(z-w)^4} + \frac{2 T(w)}{(z-2)^2} +\frac{\partial T(w)}{(z-w)} + ... \Big) \bigg]
    \end{multline*}
    We compute the residue by pairing terms that yield $(z-w)^{-1}$ and finding the coefficients:
    \begin{align*}
        [L_m,L_n] & = \int \frac{dw}{2 \pi i} w^{n+1} \big [ w^{m+1} \partial T(w) + 2 (m+1) w^m T(w) + \frac{c}{12} m (m^2 - 1)w^{m-2} \big ] \\
        & = \int \frac{dw}{2 \pi i} w^{m+n+2} \partial T(w) + 2(m+1) \int \frac{dw}{2 \pi i}w^{m+n+1} T(w) \\ 
        & \ \ \ \ \ + \frac{c}{12} m(m^2-1) \int \frac{dw}{2\pi i}  w^{m+n-1} \\
        & = 2(m+1) L_{m+n} - (m+n+2) L_{m+n} + \frac{c}{12} m(m^2-1) \int \frac{dw}{2\pi i} w^{m+n-1} \\
        & = (m-n) L_{m+n} + \frac{c}{12} m(m^2-1) \int \frac{dw}{2\pi i} w^{m+n-1}
    \end{align*}
    To calculate the integral, consider the following cases: if $m+n = 0$, then $\int \frac{dw}{2\pi i} w^{m+n-1} = 1$; if $m+n \geq 1$, then $\int \frac{dw}{2\pi i} w^{m+n-1} = 0$. We can thus express the integral with the Kronecker delta $\delta_{m+n,0}$. We finally conclude that
    \begin{equation}
        [L_m, L_n] = (m-n) L_{m+n} + c \frac{m (m - 1)(m+1)}{12} \delta_{m+n, 0}
    \end{equation}
    Thus the 2-cocycle $\omega$ representing the central extension of the Witt algebra can be rewritten by comparing with the equation (4.3),
    \begin{equation}
        \omega(L_m,L_n) = \frac{m (m - 1)(m+1)}{12} \delta_{m+n, 0}
    \end{equation}
    Since the $T(z)T(w)$ OPE calculated in theorem (\ref{TTOPE}) is unique up to a constant, we have our justification that $H^2(\witt, \mathbb{C}) \simeq \mathbb{C}$, and thus the Virasoro algebra is the \textit{unique} central extension of the Witt algebra.
    
\section{Conclusion and Future Work}
In this article we analytically computed the representative element of the cohomology class of $H^{2}(\witt, \mathbb{C})$ by using the operator product expansion of the energy-momentum tensor $T(z)T(w)$ and the commutator $ [L_m, L_n]  $ using integrals from standard complex variable theory. Note that in proposition (\ref{prop5}) and in theorem (\ref{TTOPE}) we made the assumption that the eigenfield $c^{N-1}(w)$ is a constant in order to get the correct form of the commutator $[L_m, L_n]$ for obtaining the Virasoro algebra. In our future work we would like to investigate the case where $c^{N-1}(w)$ is a monomial in $w$ of appropriate degree and obtain the corresponding algebra. For example, if $c^{N-1}(w) = w$, it can be shown by reworking proposition (\ref{prop5}) that $\Delta+\Delta' \geq N-1$, hence the singular part of the corresponding operator product expansion is
\begin{equation}
 T(z)T(w) \sim \frac{w}{(z-w)^5} + \frac{c^3(w)}{(z-w)^4} + \frac{c^2 (w)}{(z-w)^3} + \frac{c^1(w)}{(z-w)^2} + \frac{c^0(w)}{z-w}
\end{equation}
We intend to rework proposition (\ref{TTOPE}) and details therein along with the corresponding algebra obtained by computing the commutator $[L_m,L_n]$ in a future article.

\bibliographystyle{ieeetr}
\bibliography{virasorobib}

%
%
%
%
%
%


\end{document}